\documentclass[a4paper]{article}

\usepackage{geometry}

\usepackage[pdftex,hyperindex,breaklinks]{hyperref}
\usepackage{hyperref}
\usepackage{amsmath,amsthm,nicefrac}
\usepackage{mathtools}
\usepackage{tabularx}
\usepackage{multirow}
\usepackage{thm-restate}
\usepackage{doi}

\usepackage{algorithm}
\usepackage[noend]{algpseudocode}
\algnewcommand\Input{\item[\textbf{Input:}]}
\algnewcommand\Output{\item[\textbf{Output:}]}
\algnewcommand\Required{\item[\textbf{Required:}]}
\usepackage[capitalise,noabbrev,nameinlink]{cleveref}
\Crefname{enumi}{Step}{Steps}
\Crefname{line}{Step}{Steps}

\usepackage{xspace}

\usepackage[english]{babel}
\usepackage[utf8]{inputenc}
\usepackage[bitstream-charter]{mathdesign}

\newtheorem{thm}{Theorem}[section]   
\newtheorem{lem}[thm]{Lemma}

\newcommand{\R}{\ensuremath{\mathbb{R}}\xspace}
\newcommand{\K}{\ensuremath{\mathbb{K}}}
\newcommand{\KX}{\ensuremath{\K[X]}}
\newcommand{\KXX}{\ensuremath{\K[[X]]}}
\newcommand{\M}{\mathsf{M}}

\newcommand\MP{\ensuremath{\mathsf{MP}}}
\newcommand\SP{\ensuremath{\mathsf{SP}}}
\newcommand\Inv{\ensuremath{\mathsf{Inv}}}
\newcommand\floor[1]{\left\lfloor #1\right\rfloor}
\newcommand\ceil[1]{\left\lceil #1\right\rceil}

\newcommand\cpi{c_{pi}}

\newcommand\WS{{\small WS:}\xspace}

\renewcommand\div{\operatorname{div}\,}
\DeclareMathOperator\rev{rev}
\DeclareMathOperator\loglog{loglog}

\newcommand{\bigO}[1]{\ensuremath{O(#1)}}

\makeatletter
\renewcommand\theHALG@line{\thealgorithm.\arabic{ALG@line}}
\makeatother

\title{Fast In-place Algorithms for Polynomial Operations:\\ Division, Evaluation, Interpolation}

\author{
    \hspace*{-.1\textwidth}
    \begin{minipage}{.4\textwidth}
    \centering
    Pascal Giorgi\\
    LIRMM, Univ. Montpellier, CNRS\\
    Montpellier, France\\
    \nolinkurl{pascal.giorgi@lirmm.fr}
    \end{minipage}
    \hfill
    \begin{minipage}{.4\textwidth}
    \centering
    Bruno Grenet\\
    LIRMM, Univ. Montpellier, CNRS\\
    Montpellier, France\\
    \nolinkurl{bruno.grenet@lirmm.fr}
    \end{minipage}
    \hfill
    \begin{minipage}{.4\textwidth}
    \centering
    Daniel S.\ Roche\\
    United States Naval Academy\\
    Annapolis, Maryland, U.S.A.\\
    \nolinkurl{roche@usna.edu}
    \end{minipage}
}

\begin{document}
\maketitle

\begin{abstract}
  We consider space-saving versions of several important operations on univariate polynomials, namely power series inversion and
  division, division with remainder, multi-point evaluation, and interpolation.  Now-classical results show that such problems can
  be solved in (nearly) the same asymptotic time as fast polynomial multiplication.  However, these reductions, even when applied
  to an in-place variant of fast polynomial multiplication, yield algorithms which require at least a linear amount of extra space
  for intermediate results.  We demonstrate new in-place algorithms for the aforementioned polynomial computations which require
  only constant extra space and achieve the same asymptotic running time as their out-of-place counterparts.  We also provide a
  precise complexity analysis so that all constants are made explicit, parameterized by the space usage of the underlying
  multiplication algorithms.
\end{abstract}

\section{Introduction}

Computations with dense univariate polynomials or truncated power series
over a finite ring are of central
importance in computer algebra and symbolic computation.
Since the discovery of sub-quadratic (``fast'') multiplication
algorithms~\cite{karatsuba,Cook66,SS71,harvey:hal-02070816,CK91},
a major research task was to reduce many other polynomial computations
to the cost of polynomial multiplication.

This project has been largely successful, starting with symbolic Newton
iteration for fast inversion and division with remainder~\cite{kung1974},
product tree algorithms for multi-point evaluation and interpolation
\cite{BorodinMoenck:1972:FastMultiPointEval}, the ``half-GCD'' fast Euclidean algorithm
\cite{schonhage1988}, and many more related important problems
\cite{Bostan:2003,Gathen:2013}. Not only are these problems important
in their own right, but they also form the basis for many more, such as
polynomial factorization, multivariate and/or sparse polynomial
arithmetic, structured matrix computations,
and further applications in areas such as coding theory
and public-key cryptography.

But the use of fast arithmetic frequently comes at the expense of
requiring extra \emph{temporary space} to perform the computation.
This can make a difference in practice, from the small scale where
embedded systems engineers seek to minimize hardware circuitry, to the
medium scale where a space-inefficient algorithm can exceed the
boundaries of (some level of) cache and cause expensive cache misses, to
the large scale where main memory may simply not be sufficient to hold
the intermediate values.
In a streaming model, where the output must be written only once, in
order, explicit time-space tradeoffs prove that fast multiplication
algorithms will always require up to linear extra space. And indeed, all
sub-quadratic polynomial multiplication algorithms we are aware of ---
in their original formulation --- require linear extra space
\cite{karatsuba,Cook66,SS71,harvey:hal-02070816,CK91}.

However, if we
treat the output space as pre-allocated random-access memory, allowing
values in output registers to be both read and written multiple times,
then improvements are possible. In-place quadratic-time algorithms for
polynomial arithmetic are described in~\cite{Monagan93}.
A series of recent results provide
explicit algorithms and reductions from arbitrary fast multiplication
routines which have the same \emph{asymptotic} running time, but use
only constant extra space~\cite{roche:2009,hr10, GioGreRo19}.
That is, these algorithms trade a \emph{constant} increase in the
running time for a \emph{linear} reduction in the amount of extra space.
So far, these results are limited to multiplication routines and related
computations such as middle and short product.
Applying in-place multiplication algorithms directly to other problems,
such as those considered in this paper, does not immediately yield an
in-place algorithm for the desired application problem.

\subsection{Our work}

\begin{table*}[htbp]
\hspace*{-.1285\textwidth}
\begin{tabularx}{1.257\textwidth}{|c|c|c|X|}
  \cline{2-4}
\multicolumn{1}{c|}{}&\textbf{Time}  & \textbf{Space} & \textbf{Reference} \\\hline 
  \multirow{1}{*}{\textbf{Power series inversion} }&$(\lambda_m+\lambda_s)M(n)$ & $\frac{1}{2}\max(c_m, c_s+1)n$ & \cite[Alg.~\texttt{MP-inv}]{HaQueZi04} \\
  \multirow{1}{*}{at precision $n$}                &$\lambda_m\M(n)\log_{\frac{c_m+2}{c_m+1}}(n)$ & $\bigO 1$ & \Cref{thm:inplace-inversion} \\ \hline 

&$(\lambda_m+\frac{3}{2}\lambda_s)\M(n)$ & $\frac{c_m+1}{2} n$ & \cite[Alg.~\texttt{MP-div-KM}]{HaQueZi04}\\
\multirow{1}{*}{\textbf{Power series division}}&$\lambda_m\M(n)\log_{\frac{c_m+3}{c_m+2}}(n)$&$\bigO{1}$ & \Cref{thm:inplace-division}\\
\multirow{1}{*}{at precision $n$}              &$\bigO{\M(n)}$ & $\alpha n$, for any $\alpha > 0$ & Remark~\ref{rem:division-linear}\\
                                               &$\left(\lambda_m(\frac{c+1}{2}+\frac{1}{c})+\lambda_s(1+\frac{1}{c})\right)\M(n)$ &
                                               $\bigO{1}^\ddagger$ & \cref{cor:division-inplace-erasing}\\
\hline 
\multirow{1}{*}{\textbf{Euclidean division}}   & $(\lambda_m+\frac{3}{2}\lambda_s)\M(m)+\lambda_s\M(n)$ & $\max(\frac{c_m+1}{2} m-n, c_s n)$ & standard algorithm\\ 
\multirow{1}{*}{\textbf{of polynomials}}
  & $ 2\lambda_s\M(m)+(\lambda_m+\lambda_s)\M(n) $ & $(1+\max(\frac{c_m}{2}, \frac{c_s+1}{2},c_s))n$ & $\ceil{\frac{m}{n}}$ balanced div. (precomp)\\
\multirow{1}{*}{in sizes ($m+n-1,n$)}
     & $\left(\lambda_m(\frac{c+1}{2}+\frac{1}{c})+ \lambda_s(2+\frac{1}{c})\right)\M(m)$ & $\bigO{1}$ &Theorem~\ref{thm:complexity-Euclidean-division}\\\hline
  \multirow{1}{*}{\textbf{multipoint evaluation} }   & $\nicefrac{3}{2}\M(n)\log(n)$ &  $n \log(n)$ & \cite{Bostan:2003} \\
  \multirow{1}{*}{size-$n$ polynomial on $n$ points} &
    $\nicefrac{7}{2}\M(n)\log(n)$    & $n$ & \cite{vonzurGathen1992}, \cref{lem:mpeval-linspace-balanced} \\
  \multirow{1}{*}{}                                  &
    \((4+2\lambda_s/\log(\frac{c_s+3}{c_s+2}))\M(n)\log(n)\)
    & $\bigO{1}$ &  \cref{thm:inplace-multieval}\\
  \hline
  \multirow{1}{*}{\textbf{interpolation} }
    & \(\nicefrac{5}{2}\M(n)\log(n)\) & \(n\log(n)\) & \cite{Bostan:2003} \\
  \multirow{1}{*}{size-$n$ polynomial on $n$ points}
    & \(5\M(n)\log(n)\) & \(2n\) & \cite{Gathen:2013,vonzurGathen1992}, \cref{lem:interp-linspace} \\
  \multirow{1}{*}{}                                  & $\simeq 105\M(n)\log(n)$& $\bigO{1}$& Theorem~\ref{thm:interpol} \\
  \hline
\end{tabularx}
\caption{Summary of complexity analyses,
  \mdseries
  omitting non-dominant terms and assuming \(c_f\le{}c_s\le{}c_m\).
  We use $c =c_m+3$. For $\bigO{1}^\ddagger$ space, the memory model is changed such that the input dividend can be overwritten. Here, and throughout the paper, the base of the logarithms is $2$ if not otherwise stated.}
\label{table:comp}
\end{table*}

In this paper, we present new in-place algorithms for power series inversion and division, polynomial division with remainder,
multi-point evaluation, and interpolation. These algorithms are \emph{fast} because their running time is only a constant time
larger than the fastest known out-of-place algorithms, parameterized by the cost of dense polynomial multiplication.

Our space complexity model is the one of~\cite{roche:2009,hr10,GioGreRo19} where input space is read only while output space is
pre-allocated and can be used to store intermediate results. In that model, the space complexity is measured by only counting the
auxiliary space required during the computation, excluding input and output spaces. We shall mention that a single memory location
or register may contain either an element of the coefficient ring, or a pointer to the input or output space.  It follows that
in-place algorithms are those that require only a constant number of extra memory locations.

For all five problems, we present in-place variants which have nearly the same asymptotic running time as their fastest
out-of-place counterparts. The power series inversion and division algorithms incur an extra \(\log(n)\) overhead when
quasi-linear multiplication is used, while the polynomial division, evaluation, and interpolation algorithms keep the same
asymptotic runtime as the fastest known algorithm.  Our reductions essentially trade a small amount of extra runtime for a
significant decrease in space usage.

Our motivation in this work is mainly theoretical. We address the existence of such fast in-place algorithms as we already did for
polynomial multiplications~\cite{GioGreRo19}. To further extend our result, we compare precisely the number of arithmetic
operations in our algorithms with the best known theoretical bounds. These results are summarized in \cref{table:comp}.

Of course further work is needed to determine the practicability of our approach. In particular cache misses play a predominant
role when dealing with memory management. Studying the cache complexity of all these algorithms, for instance in the idealized
cache model~\cite{Frigo99}, would give more precise insights. However, the practicability will heavily depend on the underlying
multiplication algorithms.  Due to their diversity and the need for fine-tuned implementations, we leave this task to future work.

\subsection{Notation}
By a size-$n$ polynomial, we mean a polynomial of degree $\le n-1$.  As usual, we denote by $\M(n)$ a bound on the number of
operations in $\K$ to multiply two size-$n$ polynomials, and we assume that \(\alpha\M(n)\le\M(\alpha{}n)\) for any constant
\(\alpha\ge{}1\).  All known multiplication algorithms have at most a linear space complexity. Nevertheless, several results
reduce this space complexity at the expense of a slight increase in the time complexity
\cite{Thome:2002,roche:2009,hr10,GioGreRo19}. To provide tight analyses, we consider multiplication algorithms with time
complexity $\lambda_f \M(n)$ and space complexity $c_fn$ for some constants $\lambda_f\ge 1$ and $c_f\ge 0$.

Let us recall that the middle product of a size-$(m+n-1)$ polynomial $F\in\KX$ and a size-$n$ polynomial $G\in\KX$ is the size-$m$
polynomial defined as $\MP(F,G)= (FG \div X^{n-1}) \bmod X^m$.  We denote by $\lambda_m \M(n)$ and $c_m n$ the time and space
complexities of the middle product of size $(2n-1,n)$.  Then, a middle product in size $(m + n - 1,n)$ where $m<n$ can be computed
with $\ceil{\frac{n}{m}} \lambda_m\M(m)$ operations in $\K$ and $(c_m+1) m$ extra space.  Similarly, the short product of two
size-$n$ polynomials $F,G\in\KX$ is defined as $\SP(F,G)=FG \bmod X^n$ and we denote by $\lambda_s\M(n)$ and $c_s n$ its time and
space complexities.

On the one hand, the most time-efficient algorithms achieve $\lambda_f = \lambda_m = \lambda_s = 1$ while $2\leq c_f$, $c_m$,
$c_s \leq 4$, using the \emph{Transposition principle}~\cite{HaQueZi04,Bostan:2003} for $\lambda_m=\lambda_f$. On the other hand,
the authors recently proposed new space-efficient algorithms reaching $c_f = 0$, $c_m = 1$ and $c_s = 0$ while $\lambda_f$,
$\lambda_m$ and $\lambda_s$ remain constants~\cite{GioGreRo19}.

Writing $F=\sum_{i=0}^{d}f_iX^i\in\KX$, we will use $\rev(F)\in\KX$ to denote the reverse polynomial of $F$, that is,
$\rev(F)=X^{d}F(1/X)$, whose computation does not involve any operations in $\K$.  Note that we will use abusively the notation
$F_{[a..b[}$ to refer to the chunk of $F$ that is the polynomial $\sum_{i=a}^{b-1}f_iX^i$, and the notation $F_{[a]}$ for the
coefficient $f_a$.  Considering our storage, the notation $F_{[a..b[}$ will also serve to refer to some specific registers
associated to $F$.  When necessary, our algorithms indicate with {\small WS} the output registers used as work space.

\section{Inversion and divisions}
\label{sec:invdiv}

In this section, we present in-place algorithms for the inversion and the division of power series as well as the Euclidean
division of polynomials. As a first step, we investigate the space complexity from the literature for these
computations. 

\subsection{Space complexity of classical algorithms} \label{ssec:classical-complexity}

\paragraph{Power series inversion}
Power series inversion is usually computed through Newton iteration: If $G$ is the inverse of $F$ at precision $k$ then
$H = G + (1-GF)G \mod X^{2k}$ is the inverse of $F$ at precision $2k$. This allows one to compute $F^{-1}$ at precision $n$ using
$\bigO{\M(n)}$ operations in $\K$, see~\cite[Chapter~9]{Gathen:2013}.  As noticed in~\cite[Alg. \texttt{MP-inv}]{HaQueZi04} only
the coefficients of degree $k$ to $2k-1$ of $H$ are needed. Thus, assuming that $G_{[0..k[}=F^{-1}\bmod X^k$, one step of Newton
iteration computes $k$ new coefficients of $F^{-1}$ into $G_{[k..2k]}$ as 
\begin{equation}\label{eq:std-inv}
G_{[k..2k[} = -\SP(\MP( F_{[1..2k[},G_{[0..k[}), G_{[0..k[}).
\end{equation}
The time complexity is then $(\lambda_m+\lambda_s)\M(n)$ for an inversion at precision $n$. For space complexity, the most
consuming part is the last iteration of size $\frac{n}{2}$. It needs $\max(c_m,c_s+1)\frac{n}{2}$ extra registers: One can compute
the middle product in $G_{[\frac{n}{2}..n[}$ using $c_m\frac{n}{2}$ extra registers, then move it to $\frac{n}{2}$ extra registers
and compute the short product using $c_s\frac{n}{2}$ registers.

\paragraph{Power series division}
Let $F,G\in\KXX$, the fast approach to compute $F/G \bmod X^n$ is to first invert $G$ at precision $n$ and then to multiply the
result by $F$. The complexity is given by one inversion and one short product at precision $n$. Actually, Karp and Markstein
remarked in~\cite{KaMa97} that $F/G$ can be directly computed during the last iteration. Applying this trick, the complexity
becomes $(\lambda_m+\frac{3}{2}\lambda_s)\M(n)$~\cite{HaQueZi04}, see also~\cite{Bernstein:Fastmul:2008}.  The main difference
with inversion is the storage of the short product of size $\frac{n}{2}$, yielding a space complexity of
$\max(c_m+1,c_s+1)\frac{n}{2}$.

\paragraph{Euclidean division of polynomials}
Given two polynomials $A,B $ of respective size $m+n-1$ and $n$, the fast Euclidean division computes the quotient $A \div B$ as
$\rev(\rev(A)/\rev(B))$ viewed as power series at precision $m$~\cite[Chapter~9]{Gathen:2013}. The remainder $R$ is retrieved with
a size-$n$ short product, yielding a total time complexity of $(\lambda_m+\frac{3}{2}\lambda_s)\M(m) + \lambda_s\M(n)$.  Since the
remainder size is not determined by the input size we assume that we are given a maximal output space of size $n-1$.  As this
space remains free when computing the quotient, this step requires $\tfrac{1}{2}\max(c_m+1,c_s+1)m-n+1$ extra space, while
computing the remainder needs $c_sn$.

As a first result, when $m\le n$, using space-efficient multiplication is enough to obtain an in-place $\bigO{\M(n)}$ Euclidean
division. Indeed, the output space is enough to compute the \emph{small} quotient, while the remainder can be computed in-place
\cite{GioGreRo19}.

When $m>n$, the space complexity becomes $\bigO{m-n}$.  In that case, the Euclidean division of $A$ by $B$ can also be computed by
$\ceil{\frac{m}{n}}$ \emph{balanced} Euclidean divisions of polynomials of size $2n-1$ by $B$. It actually corresponds to a
variation of the \emph{long division algorithm}, in which each step computes $n$ new coefficients of the quotient.  To save some
time, one can precompute the inverse of $\rev(B)$ at precision $n$, which gives a time complexity
$(\lambda_m+\lambda_s)\M(n) + \frac{m}{n}2\lambda_s\M(n)\leq 2\lambda_s\M(m)+(\lambda_m+\lambda_s)\M(n) $ and space complexity
$(1+\max(\frac{c_m}{2}, \frac{c_s+1}{2}, c_s))n$.

Finally, one may consider to only compute the quotient or the remainder. Computing quotient only is equivalent to power series
division. For the computation of the remainder, it is not yet known how to compute it without the quotient. In that case, we shall
consider space usage for the computation and the storage of the quotient.  When $m$ is large compared to $n$, one may notice that
relying on balanced divisions does not require one to retain the whole quotient, but only its $n$ latest computed coefficients. In
that case the space complexity only increases by $n$.  Since we can always perform a middle product via two short products, we
obtain the following result.

\begin{lem}\label{lem:remainder}
  Given $A\in\KX$ of size $m$ and $B\in\KX$, monic of size $n$,
  and provided $n$ registers for the output,
  the remainder $A\bmod{}B$ can be computed using
  \(2\lambda_s\M(m) + 3\lambda_s\M(n) + \bigO{m+n}\)
  operations in $\K$ and
  \((c_s+2)n\)
  extra registers.
\end{lem}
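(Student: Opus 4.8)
The plan is to compute $A\bmod B$ by the balanced (long) division sketched just before the statement, exploiting that only the remainder, and never the full quotient, has to be retained. Since $B$ is monic, $\rev(B)$ has constant term $1$ and is therefore a unit modulo $X^n$, so I would first precompute the truncated inverse $\rev(B)^{-1}\bmod X^n$ by Newton iteration and keep it in $n$ auxiliary registers. This is the only precision-$n$ Newton computation; realizing the middle product of each doubling step as two short products (the remark quoted just before the lemma) and adding the step's own short product gives three short products per step, geometrically dominated by the last one, for a total of $3\lambda_s\M(n)$, while using only the $c_s n$ workspace of a short product on top of the $n$ registers holding the inverse.

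I would then run $\ceil{m/n}$ balanced divisions, at each step dividing a running size-$(2n-1)$ chunk $C$ (the previous local remainder shifted, combined with the next block read from the read-only input) by $B$. For each chunk I (i) read off a size-$n$ block $q$ of quotient coefficients as $\rev\bigl(\SP(\rev(C)_{[0..n[},\rev(B)^{-1})\bigr)$, a single short product, and (ii) recover the local remainder $C-qB$: because $q$ is exactly the quotient, the top $n$ coefficients of $qB$ cancel those of $C$, so $C-qB$ has degree $<n-1$ and is obtained by one short product computing $(qB)\bmod X^{n-1}$ followed by a subtraction. Hence each balanced step is only two short products, $2\lambda_s\M(n)$, and crucially uses no middle product, which is precisely what keeps the workspace at $c_s n$ rather than $c_m n$. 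Summing over the $\ceil{m/n}$ steps and invoking superlinearity $\tfrac{m}{n}\M(n)\le\M(m)$ bounds the loop by $2\lambda_s\M(m)$, the reversals, additions, and block shifts contributing the $\bigO{m+n}$ term.

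For the space bound I would note that the data alive at once are the stored inverse ($n$ registers), the current quotient block $q$ ($n$ registers), and the scratch of a single short product ($c_s n$ registers); the running remainder lives in the pre-allocated output, and each quotient block is overwritten at the next step. This totals $(c_s+2)n$, matching the claim, and is never exceeded during the cheaper inverse precomputation, which needs only $(c_s+1)n$.

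The main obstacle I anticipate is the in-place bookkeeping rather than the arithmetic: I must verify that the update $C-qB$ can be carried out while $q$, the output remainder, and the short-product scratch coexist within the $(c_s+2)n$ budget; that each chunk $C$ is assembled in place from the shifted previous remainder and a fresh block of the read-only input without an extra buffer; and that the last, possibly shorter, boundary block (of size at most $n$), together with the degree bookkeeping ensuring $\deg(C-qB)<n-1$ at every step, folds into the stated lower-order $3\lambda_s\M(n)+\bigO{m+n}$ terms.
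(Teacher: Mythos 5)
Your proposal is correct and takes essentially the same route as the paper's (implicit) proof: precompute $\rev(B)^{-1}\bmod X^n$ by Newton iteration with each middle product realized as two short products (the $3\lambda_s\M(n)$ term), then perform $\ceil{m/n}$ balanced divisions retaining only the latest size-$n$ quotient block, each step costing two short products (the $2\lambda_s\M(m)$ term), with the stored inverse, the quotient block, and one short-product workspace giving the $(c_s+2)n$ registers. The in-place bookkeeping issues you flag at the end do work out (compute the quotient block from the reversed previous remainder first, then overwrite the output registers with $\SP(q,B)\bmod X^{n-1}$ and subtract from the read-only input block), and they are exactly the details the paper also leaves implicit.
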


\subsection{In-place power series inversion}

We notice that during the first Newton iterations, only a few coefficients of the inverse have been already written. The output
space thus contains lots of free registers, and the standard algorithm can use them as working space. In the last iterations, the
number of free registers becomes too small to perform a standard iteration. Our idea is then to \emph{slow down} the computation.
Instead of still doubling the number of coefficients computed at each iteration, the algorithm computes less and less
coefficients, in order to be able to use the free output space as working space. We denote these two phases as acceleration and
deceleration phases.

The following easy lemma generalizes Newton iteration to compute only $\ell\le k $ new coefficients from an inverse at precision $k$.
\begin{lem}\label{lem:inplace-inv}
  Let $F$ be a power series and $G_{[0..k[}$ contain its inverse at precision $k$.  Then for $0 <\ell\le k$, if we compute
\begin{equation}\label{eq:inplace-inv}
  G_{[k..k+\ell[} = -\SP\left(\MP\left(F_{[1..k+\ell[},G_{[0..k[}\right), G_{[0..\ell[}\right)
\end{equation}
then $G_{[0..k+\ell[}$ contains the inverse of $F$ at precision $k+\ell$.
\end{lem}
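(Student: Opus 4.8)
The plan is to obtain the formula from the classical Newton correction and then read off the two relevant coefficient ranges as a middle product followed by a short product. First I would set $E = FG - 1$; since $G_{[0..k[}$ is the inverse of $F$ at precision $k$, we have $E \equiv 0 \pmod{X^k}$. The usual Newton step replaces $G$ by $H = G - GE$, and a one-line computation gives $FH = (1+E)-(1+E)E = 1 - E^2 \equiv 1 \pmod{X^{2k}}$. As $E \equiv 0 \pmod{X^k}$ also forces $GE \equiv 0 \pmod{X^k}$, we have $H \equiv G \pmod{X^k}$, so the low coefficients are untouched and the new coefficients of $H$, in degrees $k,\dots,2k-1$, are exactly those of $-GE$. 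Because $\ell \le k$ gives $k+\ell \le 2k$, truncating $H$ at precision $k+\ell$ still yields the (unique) inverse at that precision; hence it suffices to show that the formula computes the coefficients of $-GE$ in degrees $k$ to $k+\ell-1$.

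Next I would isolate those $\ell$ coefficients. Writing $[X^{k+t}](GE) = \sum_s G_s E_{k+t-s}$ and using $E \equiv 0 \pmod{X^k}$, only the indices $0 \le s \le t$ survive, so for $0 \le t < \ell$ this coefficient depends only on $G_{[0..\ell[}$ and on $E_k,\dots,E_{k+\ell-1}$. Setting $\tilde E = \sum_{t=0}^{\ell-1} E_{k+t}X^t$, I would check that $[X^{k+t}](GE) = [X^t]\bigl(G_{[0..\ell[}\,\tilde E\bigr)$, which is exactly the degree-$t$ coefficient of the short product $\SP(\tilde E, G_{[0..\ell[})$. This accounts for the outer $-\SP(\,\cdot\,, G_{[0..\ell[})$ in the statement.

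It then remains to identify $\tilde E$ with $\MP(F_{[1..k+\ell[}, G_{[0..k[})$. Since $E = FG - 1$ and the constant $-1$ only affects degree $0$, $\tilde E$ collects the coefficients of $FG$ in degrees $k$ to $k+\ell-1$. Expanding $[X^{k+t}](FG) = \sum_i f_i G_{k+t-i}$ and noting that $G_{[0..k[}$ forces $1 \le i \le k+\ell-1$ --- so $f_0$ never contributes, which is exactly why only $F_{[1..k+\ell[}$ is needed --- I would match this sum against the middle-product definition $\MP(A,B) = (AB \div X^{n-1}) \bmod X^m$ taken with $A = F_{[1..k+\ell[}$ viewed as a size-$(k+\ell-1)$ polynomial, $B = G_{[0..k[}$ of size $n=k$, and output size $m = \ell$.

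The one delicate point --- and the main obstacle --- is precisely this last matching: one has to track the reindexing of the chunk $F_{[1..k+\ell[}$, whose natural degrees start at $1$, together with the $\div X^{k-1}$ shift built into the middle product, and confirm that the degree-$t$ coefficient of $\MP(F_{[1..k+\ell[}, G_{[0..k[})$ equals $[X^{k+t}](FG)$ for each $0 \le t < \ell$. Once the indices are aligned, substituting $\tilde E = \MP(F_{[1..k+\ell[}, G_{[0..k[})$ into the short product gives $G_{[k..k+\ell[} = -\SP\bigl(\MP(F_{[1..k+\ell[}, G_{[0..k[}), G_{[0..\ell[}\bigr)$, and therefore $G_{[0..k+\ell[}$ is the inverse of $F$ at precision $k+\ell$, as claimed.
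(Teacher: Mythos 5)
Your proposal is correct, and it follows essentially the same route the paper takes implicitly: the paper states this lemma without proof, as an ``easy'' generalization of the identity~\eqref{eq:std-inv} from~\cite{HaQueZi04}, which is itself obtained exactly as you do --- a Newton step $H = G + (1-FG)G$, the observation that only the coefficients in the window $[k..k+\ell[$ of $-G(FG-1)$ are new, and the identification of that window as a middle product fed into a short product. Your index-matching of $\MP(F_{[1..k+\ell[},G_{[0..k[})$ with the coefficients $k,\dots,k+\ell-1$ of $FG$ (with $f_0$ dropping out because $G_{[0..k[}$ has size $k$) is the routine verification the paper omits, and it goes through as you describe.
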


Algorithm~\ref{alg:inplace-inversion} is an in-place fast inversion algorithm.  Accelerating and decelerating phases correspond to
$\ell = k$ and $\ell < k$.

\begin{algorithm}
\caption{In-Place Fast Power Series Inversion (\textsc{InPlaceInv})}
\begin{algorithmic}[1]
  \Input $F\in\KX$ of size $n$, such that $F_{[0]}$ is invertible;
  \Output $G\in\KX$ of size $n$, such that $FG = 1\mod X^n$.
  \Required $\MP$ and $\SP$ alg. using extra space  $\le c_mn$ and $\le c_sn$.
  \State $G_{[0]}\gets F_{[0]}^{-1}$
  \State $k\gets 1$,\quad $\ell\gets{}1$
  \While{$\ell > 0$}
    \State $G_{[n-\ell..n[}\gets \MP(F_{[1..k+\ell[},G_{[0..k[})$ \Comment \WS $G_{[k..n-\ell[\phantom{+\ell}}$
      \label{step:inv:mp1}
    \State $G_{[k..k+\ell[}\gets \SP(G_{[0..\ell[}, -G_{[n-\ell..n[})$\Comment  \WS $G_{[k+\ell..n-\ell[}$
      \label{step:inv:sp1}
    \State $k \gets k+\ell$
    \State $\ell\gets \min\left(k, \floor{\frac{n-k}{c}}\right)$ where $c = 2+\max(c_m,c_s)$
  \EndWhile
\State $G_{[k..n[} \gets \SP(G_{[0..n-k[},-\MP(F_{[1..n[},G_{[0..k[}))$ \Comment $\bigO{1}$ space
    \label{step:inv:final}
\end{algorithmic}
\label{alg:inplace-inversion}
\end{algorithm}

\begin{restatable}{thm}{inplaceinversion} \label{thm:inplace-inversion}
  Algorithm \ref{alg:inplace-inversion} is correct.
  It uses $O(1)$ space, and either
  $\lambda_m\M(n) \log_{\frac{c_m+2}{c_m+1}}(n) + \bigO{\M(n)}$ operations in $\K$
  when $\M(n)$ is quasi-linear, or
  $\bigO{\M(n)}$ operations in $\K$ when $\M(n)=n^{1+\gamma}$, $0<\gamma\le 1$.
\end{restatable}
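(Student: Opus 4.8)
The plan is to prove correctness, the $\bigO{1}$ space bound, and the two running-time estimates separately, with the time analysis of the deceleration phase as the heart of the argument. For correctness I would induct on the loop, maintaining the invariant that $G_{[0..k[}$ holds $F^{-1}\bmod X^k$ at the top of each iteration; the base case is $G_{[0]}\gets F_{[0]}^{-1}$ and the inductive step is exactly \Cref{lem:inplace-inv}, so the only real content is checking that the in-place schedule evaluates \cref{eq:inplace-inv} without overwriting an operand still in use. This is a disjointness check on \cref{step:inv:mp1,step:inv:sp1}: the middle-product output $G_{[n-\ell..n[}$ and its scratch $G_{[k..n-\ell[}$ lie in $[k,n)$ and so never touch the read-only prefix $G_{[0..k[}$, while $\ell\le k$ together with $k+\ell\le n-\ell$ (which holds as $c\ge2$) ensure the short-product target $G_{[k..k+\ell[}$ is disjoint from its inputs $G_{[0..\ell[}$, $G_{[n-\ell..n[}$ and from the scratch $G_{[k+\ell..n-\ell[}$. \Cref{step:inv:final} is the same identity with $\ell=n-k<c$, producing the final $\bigO{1}$ coefficients.

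For the space bound I would observe that every requested work register is carved out of the free output cells, so no genuine allocation occurs. After discarding the non-binding cases, the two constraints are $n-k-\ell\ge(c_m+1)\ell$ for the (possibly unbalanced) middle product and $n-k-2\ell\ge c_s\ell$ for the short product; both reduce to the single inequality $n-k\ge c\ell$ with $c=2+\max(c_m,c_s)$, which is precisely what the rule $\ell\gets\min(k,\floor{(n-k)/c})$ enforces. The terminal step manipulates only $n-k<c=\bigO{1}$ coefficients, hence runs in constant space.

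For the running time I would split the loop at the transition $\ell=k$ versus $\ell<k$. The acceleration phase ($\ell=k$, so $k$ doubles up to $\bigO{n}$) does one middle and one short product on sizes $1,2,4,\dots$, which telescope to $\bigO{\M(n)}$ by superadditivity of $\M$. In the deceleration phase I would follow the remaining length $r=n-k$: each step sends $r\mapsto r-\floor{r/c}$, shrinking it by the factor $\tfrac{c-1}{c}=\tfrac{c_m+1}{c_m+2}$ once we use the hypothesis $c_s\le c_m$ so that $c=c_m+2$; controlling the floor then gives a step count of $\log_{\frac{c_m+2}{c_m+1}}(n)+\bigO{1}$. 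The short products, of geometrically decreasing sizes $\ell_j$, again sum to $\bigO{\M(n)}$, so everything hinges on the middle products, each costing $\ceil{k_j/\ell_j}\lambda_m\M(\ell_j)$.

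The main obstacle is to sum these middle products tightly enough to produce \emph{both} stated regimes. For quasi-linear $\M$ I would bound $\ceil{k/\ell}\M(\ell)\le\M(k)+\M(\ell)$ (using $\M((k/\ell)\ell)\ge(k/\ell)\M(\ell)$), so that the deceleration middle products cost at most $\lambda_m\sum_j\M(k_j)+\bigO{\M(n)}\le\lambda_m\M(n)\log_{\frac{c_m+2}{c_m+1}}(n)+\bigO{\M(n)}$, since each $\M(k_j)\le\M(n)$ and there are $\log_{\frac{c_m+2}{c_m+1}}(n)+\bigO{1}$ steps. This estimate is too weak for $\M(n)=n^{1+\gamma}$, where it would retain a spurious logarithmic factor; there I would instead write $\ceil{k/\ell}\M(\ell)\le k\ell^\gamma+\ell^{1+\gamma}$ and use the geometric decay of $\ell_j$ to get $\sum_j k_j\ell_j^\gamma\le n\,\ell_0^{\gamma}\sum_j\bigl(\tfrac{c-1}{c}\bigr)^{\gamma j}=\bigO{n\cdot n^\gamma}=\bigO{\M(n)}$, the series converging because $(\tfrac{c-1}{c})^\gamma<1$ for every fixed $\gamma>0$. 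Keeping the floors from perturbing the step count by more than an additive $\bigO{1}$, and pinning down the logarithm base from the contraction factor, are the only genuinely delicate points.
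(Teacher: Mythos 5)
Your proposal is correct and follows essentially the same route as the paper's proof: the same invariant-based correctness via \cref{lem:inplace-inv}, the same space constraint $n-k\ge c\ell$ with $c=2+\max(c_m,c_s)$, and the same acceleration/deceleration split where each middle product is bounded through $\ceil{k/\ell}\M(\ell)$ and the $\M(n)=n^{1+\gamma}$ case is handled by geometric decay of the $\ell_j$ together with subadditivity of $x^\gamma$. The only cosmetic difference is that you obtain $c=c_m+2$ from the standing assumption $c_s\le c_m$, whereas the paper gets it by choosing the in-place short products of~\cite{GioGreRo19} (i.e., $c_s=0$); both justifications are valid.
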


\begin{proof}
  \Cref{step:inv:mp1,step:inv:sp1}, and \cref{step:inv:final}, correspond to Equation~\eqref{eq:inplace-inv}. They compute $\ell$ new coefficients of $G$ when $k$ of them
  are already written in the output, whence
Lemma~\ref{lem:inplace-inv} implies the
  correctness.

  \Cref{step:inv:mp1} needs $(c_m+2)\ell$ free registers for its computation and its storage. Then $(c_s+2)\ell$ free registers
  are needed to compute $\SP(G_{[0..\ell[}, G_{[n-\ell..n[})$ using $\ell$ registers for $G_{[n-\ell..n[}$ and $(c_s+1)\ell$
  registers for the short product computation and its result.  For this computation to be done in-place, we need $c\ell\le
  n-k$. Since at most $k$ new coefficients can be computed, the maximal number of new coefficients in each step is
  $ \ell = \min\left(k, \floor{\frac{n-k}{c}}\right).  $
    
  Each iteration uses $\bigO{\M(k)}$ operations in $\K$: $\bigO{\ceil{k/\ell}\M(\ell)}$ for the middle product at
  \cref{step:inv:mp1} and $\bigO{\M(\ell)}$ for the short product at \cref{step:inv:sp1}.  The accelerating phase stops when
  $k > \frac{n-k}{c+1}$, that is, $k>\frac{n}{c+2}$.  It costs $\sum_{i=0}^{\floor{\log\frac{n}{c+2}}} \M(2^i) = \bigO{\M(n)}$.
  During the decelerating phase, each iteration computes a constant fraction of the remaining coefficients. Hence, this phase
  lasts for $\delta=\log_{\frac{c}{c-1}} n$ steps.

  Let $\ell_i$ and $k_i$ denote the values of $\ell$ and $k$ at the $i$-th iteration of the deceleration phase and $t_i=n-k_i$.
  Then one iteration of the deceleration phase costs one middle product in sizes $( n-t_i+\floor{\frac{t_i}{c}}-1,n-t_i)$ and one
  short product in size $\floor{\frac{t_i}{c}}$. The total cost of all the short products amounts to
  $\sum_{i}\M(t_i)= \bigO{\M(n)}$ since $\sum_{i}t_i \leq cn$. The cost of the middle product at the $i$-th step is
  \[
    \lambda_m\ceil{{(n-t_i)}/{\floor{\tfrac{t_i}{c}}}} \M\left(\floor{\tfrac{t_i}{c}}\right) = \lambda_m \M(n) +\bigO{n}.
  \]
  Therefore, the total cost of all the middle products is at most $\lambda_m \M(n)\log_{\frac{c}{c-1}}(n) + \bigO{\M(n)}$ and is
  dominant in the complexity.  We can choose the in-place short products of~\cite{GioGreRo19} and get $c=c_m+2$.  The complexity
  is then $\lambda_m\M(n) \log_{\frac{c_m+2}{c_m+1}}(n) + \bigO{\M(n)}$.

  If $\M(n)=n^{1+\gamma}$ with $0<\gamma\le 1$, the cost of each iteration is
  $\bigO{\ceil{\frac{n-t_i}{\ell_i}}\ell_i^{1+\gamma}}$.  Since $\ell_0\le n$, we have $\ell_i < n (\frac{c-1}{c})^i +c$, whence
  \[
    \sum_{i=1}^\delta\ceil{\frac{n-t_i}{\ell_i}}\ell_i^{1+\gamma} \le n  \sum_{i=1}^\delta\ell_i^\gamma \le  n \sum_{i=1}^\delta \left(n \left(\frac{c-1}{c}\right)^i +c\right)^\gamma.
  \]
  Since $0<\gamma\leq 1$, we have $(\alpha+\beta)^\gamma\leq \alpha^\gamma+\beta^\gamma$ for any $\alpha,\beta>0$, and the
  complexity is $ n^{1+\gamma}\sum_{i=1}^\delta \left(\frac{c-1}{c}\right)^{i\gamma} +\bigO{n\log n} = \bigO{\M(n)}.  $
\end{proof}

\subsection{In-place division of power series}\label{ssec:serie-div}

Division of power series can be implemented easily as an inversion followed by a product. Yet, using in-place algorithms for these
two steps is not enough to obtain an in-place division algorithm since the intermediate result must be stored. Karp and
Markstein's trick, that includes the dividend in the last iteration of Newton iteration~\cite{KaMa97}, cannot be used directly in
our case since we replace the very last iteration by several ones. We thus need to build our in-place algorithm on the following
generalization of their method.

\begin{lem}\label{lem:psdiv}
Let $F$ and $G$ be two power series, $G$ invertible, and $Q_{[0..k[}$ contain their quotient at precision $k$. Then for $0 < \ell \le k$, if we compute
\[ Q_{[k..k+\ell[} =  \SP\left(G^{-1}_{[0..\ell[}, F_{[k..k+\ell[} - \MP(G_{[1..k+\ell[}, Q_{[0..k[})\right)\]
then $Q_{[0..k+\ell[}$ contains their quotient at precision $k+\ell$.
\end{lem}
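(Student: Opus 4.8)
The plan is to verify the formula directly from the defining identity $GQ=F$, where $Q=F/G$ denotes the true quotient, by tracking exactly which coefficients each of the three operations extracts. First I would set $\tilde Q=Q_{[0..k[}$, which by hypothesis equals $(F/G)\bmod X^k$, and introduce the residual $R=F-G\tilde Q$. Since $\tilde Q$ agrees with $Q$ modulo $X^k$ and $GQ=F$, we get $G\tilde Q\equiv F\pmod{X^k}$, hence $R\equiv 0\pmod{X^k}$; I would write $R=X^kR'$ for a power series $R'$.

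Next I would characterize the coefficients we are after. Writing $Q=\tilde Q+X^kP$, where $P$ is the power series carrying $q_k,q_{k+1},\dots$ as its low-order coefficients, the identity $GQ=F$ truncated modulo $X^{k+\ell}$ becomes $GX^kP\equiv R\pmod{X^{k+\ell}}$, i.e.\ $GP\equiv R'\pmod{X^\ell}$. Multiplying by $G^{-1}$ gives $P\bmod X^\ell=(G^{-1}R')\bmod X^\ell=\SP(G^{-1}_{[0..\ell[},R'_{[0..\ell[})$, which is exactly the block $Q_{[k..k+\ell[}$ read off as a size-$\ell$ polynomial. It therefore remains only to show that the second argument of the short product, namely $F_{[k..k+\ell[}-\MP(G_{[1..k+\ell[},Q_{[0..k[})$, equals $R'_{[0..\ell[}$, that is, the coefficients of $R$ in degrees $k,\dots,k+\ell-1$.

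The core step is to identify the middle product with the relevant block of $G\tilde Q$. By definition, $\MP(G_{[1..k+\ell[},Q_{[0..k[})$ with second operand of size $k$ extracts, from the product of $Q_{[0..k[}$ with the size-$(k+\ell-1)$ first operand, the coefficients of degrees $k-1,\dots,k+\ell-2$; since that first operand represents $\sum_{i=1}^{k+\ell-1}g_iX^{i-1}=X^{-1}G_{[1..k+\ell[}$, the combined effect of this shift and the middle product's intrinsic division by $X^{k-1}$ is to return precisely the coefficients of degrees $k,\dots,k+\ell-1$ of $G_{[1..k+\ell[}\cdot\tilde Q$ (viewing $G_{[1..k+\ell[}$ as an honest polynomial). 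I would then argue that on this degree range $G_{[1..k+\ell[}\cdot\tilde Q$ coincides with $G\tilde Q$: dropping $g_0$ affects only coefficients of degree $<k$ because $\deg\tilde Q<k$, and truncating $G$ at $X^{k+\ell}$ is harmless since $\tilde Q$ has nonnegative degrees. Hence the middle product equals $(G\tilde Q)_{[k..k+\ell[}$, so $F_{[k..k+\ell[}-\MP(\dots)=R_{[k..k+\ell[}=R'_{[0..\ell[}$, and the short product yields $Q_{[k..k+\ell[}$, giving $Q_{[0..k+\ell[}=(F/G)\bmod X^{k+\ell}$.

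I expect the index bookkeeping in the third paragraph to be the only delicate point: matching the $X^{k-1}$ shift of the middle product with the degree-$k$ starting index, and confirming that replacing $G$ by the chunk $G_{[1..k+\ell[}$ leaves the targeted coefficients untouched. Everything else is a direct manipulation of $GQ=F$, so once this identification is pinned down the proof closes immediately.
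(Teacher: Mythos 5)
Your proposal is correct and follows essentially the same route as the paper's proof: decompose the quotient as $Q_{[0..k[} + X^k P$, use $GQ=F$ truncated at $X^{k+\ell}$ to obtain $P \bmod X^\ell = \bigl(G^{-1}\cdot((F-GQ_{[0..k[})\div X^k)\bigr)\bmod X^\ell$, and identify the middle product $\MP(G_{[1..k+\ell[},Q_{[0..k[})$ with the needed coefficient block of $GQ_{[0..k[}$. The only difference is that you spell out the index bookkeeping for the middle product, which the paper asserts without detail.
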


\begin{proof}
  Let us write $F/G = Q_k + X^k Q_\ell + \bigO{X^{k+\ell}}$. We prove that $Q_\ell = G^{-1}\times((F-GQ_k)\div X^k)\bmod
  X^\ell$. By definition, $F \equiv G(Q_k+X^kQ_\ell)\bmod X^{k+\ell}$. Hence $(F-GQ_k)\div X^k = GQ_\ell\bmod X^\ell$. Therefore,
  $Q_\ell = (G^{-1} \times ((F-GQ_k)\div X^k))\bmod X^\ell$.  Finally, since only the coefficients of degree $k$ to $k+\ell-1$ of
  $GQ_k$ are needed, they can be computed as $\MP(G_{[1..k+\ell[},Q_{[0..k[})$.
\end{proof}

Algorithm~\ref{alg:psdiv} is an in-place power series division algorithm based on Lemma~\ref{lem:psdiv}, choosing at each step the
appropriate value of $\ell$ so that all computations can be performed in place.

\begin{algorithm}
\caption{In-Place Power Series Division (\textsc{InPlacePSDiv})}
\begin{algorithmic}[1]
  \Input $F, G\in\KX$ of size $n$, such that $G_{[0]}$ is invertible;
  \Output $Q\in\KX$ of size $n$, such that $F/G = Q\mod X^n$.
  \Required $\MP$, $\SP$, $\Inv$ alg. using extra space $\leq c_mn,~c_sn,~c_i n$.
  \State $k\gets \floor{n/\max(c_i+1,c_s+2)}$ \label{psdiv:t}
  \State $Q_{[n-k..n[}\gets \rev(\Inv(G_{[0..k[}))$ \Comment \WS $Q_{[0..n-k[}$ \label{psdiv:inv}
  \State $Q_{[0..k[}\gets \SP(F_{[0..k[}, \rev(Q_{[n-k..n[}))$ \Comment \WS $Q_{[k..n-k[}$ \label{psdiv:sp}
  \State $\ell\gets \floor{(n-k)/(3+\max(c_m,c_s))}$
  \While{$\ell > 0$}
    \State $Q_{[n-2\ell..n-\ell[}\gets \MP(G_{[1..k+\ell[}, Q_{[0..k[})$ \Comment \WS $Q_{[k..n-2\ell[}$\label{psdiv:mp}
    \State $Q_{[n-2\ell..n-\ell[}\gets F_{[k..k+\ell[} - Q_{[n-2\ell..n-\ell[}$\label{psdiv:diff}
    \State  \emph{let us define $Q_\ell^*=\rev(Q_{[n-\ell..n[})$} \label{psdiv:splo}
    \Statex \hspace{.85cm}$Q_{[k..k+\ell[}\gets \SP(Q_{[n-2\ell..n-\ell[},Q_\ell^*)$ \Comment \WS $Q_{[k+\ell..n-2\ell[}$
    \State $k\gets k+\ell$
    \State $\ell\gets \floor{(n-k)/(3+\max(c_m,c_s))}$ \label{psdiv:s}
  \EndWhile
  \State $tmp\gets F_{[k..n[}- \MP(G_{[1..n[},Q_{[0..k[})$ \Comment constant space
    \label{psdiv:mp2}
  \State $Q_{[k..n[}\gets \SP(tmp,\rev(Q_{[k..n[}))$ \Comment constant space
    \label{psdiv:sp2}
\end{algorithmic}
\label{alg:psdiv}
\end{algorithm}

\begin{restatable}{thm}{inplacedivision}\label{thm:inplace-division}
  Algorithm \ref{alg:psdiv} is correct. It uses $\bigO{1}$ space, and either
  $\lambda_m\M(n) \log_{\frac{c_m+3}{c_m+2}}(n) + \bigO{\M(n)}$ operations in $\K$ when $\M(n)$ is quasi-linear or $\bigO{\M(n)}$
  operations in $\K$ when $\M(n)=\bigO{n^{1+\gamma}}$, $0<\gamma \le 1$.
\end{restatable}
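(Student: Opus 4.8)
The plan is to follow the template of the proof of \cref{thm:inplace-inversion}, replacing the Newton inversion recurrence by the Karp--Markstein recurrence of \cref{lem:psdiv} and accounting for the extra storage of $G^{-1}$. For correctness I would first argue that after \cref{psdiv:inv,psdiv:sp} the output holds the quotient at precision $k$: \cref{psdiv:inv} computes $G^{-1}\bmod X^k$ and stores it reversed in $Q_{[n-k..n[}$, and \cref{psdiv:sp} forms $Q_{[0..k[}=\SP(F_{[0..k[},G^{-1}_{[0..k[})=F/G\bmod X^k$. The key invariant, which I would state and verify, is that at every moment the top $\ell$ registers $Q_{[n-\ell..n[}$ contain $\rev(G^{-1}_{[0..\ell[})$, so that $Q_\ell^*=\rev(Q_{[n-\ell..n[})=G^{-1}_{[0..\ell[}$; this holds because the reversed inverse sits at the very top of the output and $\ell$ only decreases, so shortening the window retains exactly the low-order coefficients of $G^{-1}$. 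Granting this, each loop iteration (\crefrange{psdiv:mp}{psdiv:splo}) realizes \cref{lem:psdiv}: $\MP(G_{[1..k+\ell[},Q_{[0..k[})$ produces the needed middle coefficients of $GQ_k$, subtracting from $F_{[k..k+\ell[}$ gives $(F-GQ_k)\div X^k\bmod X^\ell$, and the short product with $Q_\ell^*$ yields the $\ell$ new quotient coefficients. The final \cref{psdiv:mp2,psdiv:sp2} apply the same recurrence once more for the last $n-k<3+\max(c_m,c_s)$ coefficients, with $\rev(Q_{[k..n[})=G^{-1}_{[0..n-k[}$ again read off the top of the inverse window.

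For the space bound I would check that every computation fits into the free registers flagged by the \WS comments. The setup needs $c_ik\le n-k$ for the inversion and $(c_s+2)k\le n$ for the short product at \cref{psdiv:sp}, both guaranteed by $k=\floor{n/\max(c_i+1,c_s+2)}$. Inside the loop the middle product is stored in $Q_{[n-2\ell..n-\ell[}$ while computed in $Q_{[k..n-2\ell[}$, and the short product writes $Q_{[k..k+\ell[}$ using $Q_{[k+\ell..n-2\ell[}$; these windows are pairwise disjoint and disjoint from $Q_{[0..k[}$ and from the live inverse window $Q_{[n-\ell..n[}$. The binding inequalities are $(c_m+3)\ell\le n-k$ for the middle product and $(c_s+3)\ell\le n-k$ for the short product, both of which follow from $\ell\le(n-k)/(3+\max(c_m,c_s))$ together with $c_s\le c_m$. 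Hence the algorithm uses only $\bigO{1}$ extra registers.

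For the running time, I would stress that unlike \cref{thm:inplace-inversion} there is no acceleration phase: the setup already delivers precision $k=\Theta(n)$, so the black-box routines of \cref{psdiv:inv,psdiv:sp} cost only $\bigO{\M(k)}=\bigO{\M(n)}$. The loop is a pure deceleration phase in which each iteration computes the fraction $1/(3+\max(c_m,c_s))$ of the remaining coefficients, so with $c=c_m+3$ (using the in-place short products of \cite{GioGreRo19} and $c_s\le c_m$) it runs for $\log_{\frac{c}{c-1}}(n)=\log_{\frac{c_m+3}{c_m+2}}(n)$ iterations. Writing $t_i=n-k_i$ and $\ell_i=\floor{t_i/c}$, I would bound the short products by $\sum_i\M(t_i)=\bigO{\M(n)}$ (since the $t_i$ decrease geometrically) and each unbalanced middle product by $\lambda_m\ceil{(n-t_i)/\ell_i}\M(\ell_i)=\lambda_m\M(n)+\bigO{n}$, exactly as in the inversion proof. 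Summing makes the middle products dominant and gives $\lambda_m\M(n)\log_{\frac{c_m+3}{c_m+2}}(n)+\bigO{\M(n)}$. For $\M(n)=\bigO{n^{1+\gamma}}$ the identical geometric-sum argument, using $(\alpha+\beta)^\gamma\le\alpha^\gamma+\beta^\gamma$, collapses the deceleration cost to $\bigO{\M(n)}$.

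The main obstacle is the register bookkeeping rather than the asymptotics: I must confirm that the reversed inverse at the top of the output is never destroyed before it is needed and that the workspace windows never overlap live data. The fact that makes this work is a monotonicity observation---as $\ell$ shrinks the algorithm requires fewer low-order coefficients of $G^{-1}$, so the registers just below the active inverse window become free precisely when the middle product at \cref{psdiv:mp} needs them. Establishing this monotonicity, together with the two inequalities $(c_m+3)\ell\le n-k$ and $(c_s+3)\ell\le n-k$, is the heart of the argument.
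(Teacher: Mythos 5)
Your proposal mirrors the paper's own proof almost exactly: correctness of the loop via \cref{lem:psdiv} together with the invariant that the reversed inverse survives in the top $\ell$ registers (the paper phrases this as the inverse being ``progressively overwritten'' since \cref{psdiv:splo} only needs $\ell$ coefficients and $\ell$ is decreasing), the same four space inequalities $c_ik\le n-k$, $(c_s+2)k\le n$, $(c_m+3)\ell\le n-k$, $(c_s+3)\ell\le n-k$, and the same deferral of the time analysis to \cref{thm:inplace-inversion} with $c=c_m+3$. The one check the paper makes explicit that you gloss over is mild: the first loop value of $\ell$ must not exceed the initial inverse precision $k$, which the paper derives from $c_i=\frac12\max(c_m,c_s+1)$.

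There is, however, one explicit claim in your write-up that is false, and it exposes a gap that the paper's proof silently shares. You assert that at \cref{psdiv:sp2} one still has $\rev(Q_{[k..n[})=G^{-1}_{[0..n-k[}$, ``read off the top of the inverse window.'' This contradicts your own invariant: after an iteration with parameter $\ell$, only the top $\ell$ registers hold inverse coefficients, because \cref{psdiv:mp} writes its result into $Q_{[n-2\ell..n-\ell[}$ and its work space covers the registers below. Now track the exit of the loop: writing $t=n-k$, one iteration maps $t\mapsto t-\floor{t/c}$, and the exit condition $\floor{t/c}=0$ can only be reached from $t=c$, so the last executed iteration has $\ell=1$ and the loop exits with $n-k=c-1\ge 2$. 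The final steps therefore need $G^{-1}\bmod X^{c-1}$, yet only the single register $Q_{[n-1]}=G^{-1}_{[0]}$ is intact; the other $c-2$ registers of $Q_{[k..n[}$ hold work-space residue (a trace with $n=20$, $c=3$ ends with $k=18$ and $Q_{[18]}$ containing the last iteration's middle-product difference, not $G^{-1}_{[1]}$). So \cref{psdiv:sp2} as written multiplies by garbage, and your justification of it cannot be repaired as stated. The gap is inherited from the paper---its proof never mentions \cref{psdiv:mp2,psdiv:sp2} at all---and it is harmless for the theorem: the leftover chunk has constant size $c-1$, so one can recompute $G^{-1}\bmod X^{c-1}$ by any naive method in $\bigO{1}$ operations and space (or compute the last $c-1$ quotient coefficients by the long-division recurrence in $\bigO{n}$ operations), without affecting the stated complexity. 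A complete proof, yours or the paper's, needs to say this explicitly rather than read the missing inverse coefficients ``off the top of the window.''
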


\begin{proof}
  The correctness follows from Lemma~\ref{lem:psdiv}.  The inverse of $G$ is computed once at \cref{psdiv:inv}, at precision
  $\floor{n/\max(c_i+1,c_s+2)}$. Its coefficients are then progressively overwritten during the loop since \cref{psdiv:splo} only
  requires $\ell$ coefficients of the inverse, and $\ell$ is decreasing. Since $c_i = \frac{1}{2}\max(c_m,c_s+1)$, $\ell$ is
  always less than the initial precision.  For simplicity of the presentation, we store the inverse in reversed order in
  $Q_{[n-k..n[}$.  \Cref{psdiv:inv} requires space $c_ik$ while the free space has size $n-k$: Since $k\le \frac{n}{c_i+1}$, the
  free space is large enough. Similarly, the next step requires space $c_sk$ while the free space has size $n-2k$, and
  $k\le\frac{n}{c_s+2}$. \Cref{psdiv:mp} needs $(c_m+1)\ell$ space and the free space has size $n-k-2\ell$, and \cref{psdiv:splo}
  requires $c_s\ell$ space while the free space has size $n-k-3\ell$. Since $\ell \le \frac{n-k}{3 + \max(c_m,c_s)}$, these
  computations can also be performed in place.

  The time complexity analysis is very similar to the one of \cref{alg:inplace-inversion} given in \cref{thm:inplace-inversion}. The main
  difference is \cref{psdiv:diff} which adds a negligible term $\bigO{n\log n}$ in the complexity.
\end{proof}

\begin{restatable}{cor}{divisioninplaceerasing}\label{cor:division-inplace-erasing}
  If it can erase its dividend, Algorithm~\ref{alg:psdiv} can be modified to improve its complexity to
  $ \left(\lambda_m(\frac{c+1}{2}+\frac{1}{c})+ \lambda_s(1+\frac{1}{c})\right)\M(n) + \bigO{n}$ operations in $\K$ where
  $c=\max(c_m+3,c_s+2)$, still using $\bigO{1}$ extra space.
\end{restatable}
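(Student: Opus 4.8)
The plan is to exploit the extra workspace that becomes available once the dividend may be overwritten. The crucial observation is that, in \cref{lem:psdiv}, extending the quotient from precision $k$ to precision $k+\ell$ only ever reads the dividend coefficients $F_{[k..k+\ell[}$ and never touches $F_{[0..k[}$ again. Hence, as soon as $Q_{[0..k[}$ has been produced, the $k$ registers holding $F_{[0..k[}$ are free. Combined with the free tail $Q_{[k..n[}$ of the output, the total amount of usable workspace stays essentially equal to $n$ for the whole computation, instead of shrinking like $n-k$ as in \cref{alg:psdiv}. It is precisely this shrinking that forced the deceleration phase and produced the $\log(n)$ overhead in \cref{thm:inplace-division}; with a workspace of constant size we can dispense with deceleration altogether.

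Concretely, I would modify \cref{alg:psdiv} as follows. Fix the block size to the constant value $p=\floor{n/c}$ with $c=\max(c_m+3,c_s+2)$, compute $G^{-1}$ once at precision $p$ and keep it (reversed) in a fixed window of $p$ registers, and obtain $Q_{[0..p[}$ by a single size-$p$ short product. Then iterate \cref{lem:psdiv} with $\ell=p$ held constant: each step performs one unbalanced middle product $\MP(G_{[1..k+p[},Q_{[0..k[})$, one subtraction against $F_{[k..k+p[}$, and one size-$p$ short product, routing intermediate results and scratch into the freed dividend registers $F_{[0..k[}$ and the free output tail. The value of $c$ is chosen exactly so that one step fits: the middle product needs $(c_m+1)p$ scratch plus $p$ for its result, giving $c\ge c_m+3$, while the short product needs $c_s p$ scratch alongside the stored $p$-coefficient operands, giving $c\ge c_s+2$.

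For the complexity, the three kinds of operations are accounted separately. The inversion runs only once, at precision $p$, costing $(\lambda_m+\lambda_s)\M(p)\le \frac{1}{c}(\lambda_m+\lambda_s)\M(n)$ by superlinearity of $\M$, which yields the two $\frac{1}{c}$ terms. The short products are all of size $p$, one per block ($\Theta(c)$ of them), and since $\M(p)\le\frac1c\M(n)$ they sum to $\lambda_s\M(n)$, giving the coefficient $1$ of $\lambda_s$. The middle products dominate: the block raising the precision to $k=jp$ is an unbalanced middle product of cost $\ceil{jp/p}\lambda_m\M(p)=j\lambda_m\M(p)$, so summing the arithmetic progression over the $\Theta(c)$ blocks gives $\Theta(c^2)\lambda_m\M(p)=\Theta(c)\lambda_m\M(n)$, which I would pin down to the stated $\tfrac{c+1}{2}\lambda_m\M(n)$ by a careful count of the boundary block. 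The subtractions and the $\bigO 1$-size final block contribute only the $\bigO n$ term.

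The main obstacle is the space bookkeeping rather than the arithmetic count. Unlike the original algorithm, where the decreasing $\ell$ let the top inverse coefficients be overwritten step by step, here $\ell$ is fixed, so the degree-$p$ inverse must be stored persistently while the per-step scratch is simultaneously found; moreover the free space is now split between the output tail $Q_{[k..n[}$, which shrinks, and the freed dividend prefix $F_{[0..k[}$, which grows, and these two regions are not contiguous. The delicate step is therefore to argue that at every value of $k$ one can lay out the inverse, the middle-product result, and the product scratch so that each in-place $\MP$ and $\SP$ call of~\cite{GioGreRo19} sees a large enough workspace drawn from this two-part pool; verifying this layout uniformly in $k$ is what the choice $c=\max(c_m+3,c_s+2)$ is engineered to guarantee, and it is where I would spend most of the care, together with nailing the exact block count that produces the $\tfrac{c+1}{2}$ constant.
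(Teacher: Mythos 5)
Your proposal is correct and is essentially the paper's own proof: the erased prefix $F_{[0..k[}$ together with the free output tail lets the block size stay fixed at $\ell=\floor{n/c}$ with $c=\max(c_m+3,c_s+2)$, and your accounting (one inversion and short product at precision $n/c$ giving the $\frac{1}{c}$ terms, short products of total input size $n$ giving $\lambda_s\M(n)$, and the arithmetic progression of unbalanced middle products summing to $\frac{c+1}{2}\lambda_m\M(n)$, with subtractions in $\bigO{n}$) matches the paper's line by line. The layout issue you single out as the main obstacle is not one in this memory model --- a work space split into two addressable regions costs only constant pointer overhead, so the paper simply counts the total free space ($n-2\ell$ at \cref{psdiv:mp,psdiv:splo}) without further argument, and additionally avoids part of the problem by writing the subtraction result directly into $F_{[k..k+\ell[}$.
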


\begin{proof}
  Once $k$ coefficients of $Q$ have been computed, $F_{[0..k[}$ is not needed anymore. This means that at \cref{psdiv:diff}, the result can
  be directly written in $F_{[k..k+\ell[}$ and that $F_{[0..k[}$ can be used as working space in the other steps of the loop. The free space
  at \cref{psdiv:mp,psdiv:splo} becomes $n-2\ell$ instead of $n-k-2\ell$ and $n-k-3\ell$ respectively. Therefore, $\ell$
  can always be chosen as large as $\floor{\frac{n}{c}}$ where $c = \max(c_m+3,c_s+2)$. Since $\ell$ stays positive, we also modify the
  algorithm to stop when all the coefficients of $Q$ have been computed.
  
  To simplify the complexity analysis, we further assume that $k$ gets the same value $\floor{\frac{n}{c}}$ at \cref{psdiv:t}.
  \Cref{psdiv:inv} requires $(\lambda_s+\lambda_m)\M(\floor{\frac{n}{c}})$ operations in $\K$. 
  The sum of the input sizes of all the short products in the algorithm is $n$. Their total complexity is thus $\lambda_s\M(n)$.  
  At the $i$-th iteration of the loop, $k = (i+1)\ell$. Therefore \cref{psdiv:mp} has complexity $i\floor{\frac{n}{c}}$. 
  \Cref{psdiv:diff}  requires $\floor{\frac{n}{c}}$ operations in $\K$.
  Altogether, the complexity of the modified algorithm is 
  \[ \lambda_s\M(n) + (\lambda_s+\lambda_m) \M\left(\floor{\tfrac{n}{c}}\right)
        + \sum_{i=1}^c \left( i\lambda_m \M\left(\floor{\tfrac{n}{c}}\right) +\floor{\tfrac{n}{c}}\right) \]
  which is $\left(\lambda_m(\frac{c+1}{2}+\frac{1}{c})+ \lambda_s(1+\frac{1}{c})\right)\M(n) + \bigO{n}$.
\end{proof}

Using similar techniques, we get the following variant. 

\begin{restatable}{rem}{divisionlinear}\label{rem:division-linear}
  Algorithm~\ref{alg:psdiv} can be easily modified to improve the complexity to $\bigO{M(n)}$ operations in $\K$ when a linear
  amount of extra space is available, say $\alpha n$ registers for some $\alpha \in \R_+$.
\end{restatable}

\subsection{In-place Euclidean division of polynomials}\label{ssec:poly_division}

If $A$ is a size-$(m+n-1)$ polynomial and $B$ a size-$n$ polynomial, one can compute their size-$m$ quotient $Q$ in place using
Algorithm~\ref{alg:psdiv}, in $O((\M(m)\log m))$ operations in $\K$. When $Q$ is known, the remainder $R = A - BQ$, can be
computed in-place using $O(\M(n))$ operations in $\K$ as it requires a single short product and some subtractions.  As already
mentioned, the exact size of the remainder is not determined by the size of the inputs.  Given any tighter bound $r<n$ on
$\deg(R)$, the same algorithm can compute $R$ in place, in time $\bigO{\M(r)}$.

Altogether, we get in-place algorithms that compute the quotient of two polynomials in time $O(\M(m)\log m)$, or the quotient and
size-$r$ remainder in time $O(\M(m)\log m + \M(r))$.  As suggested in Section~\ref{ssec:classical-complexity} and in
Remark~\ref{rem:division-linear}, this complexity becomes $O(\M(m)+ \M(r))$ whenever $m=\bigO{r}$.  Indeed, in that case the remainder
space can be used to speed-up the quotient computation.  We shall mention that computing only the remainder remains a harder
problem as we cannot count on the space of the quotient while it is required for the computation. As of today, only the
classical quadratic long division algorithm allows such an in-place computation.

We now provide a new in-place algorithm for computing both the quotient and the remainder that achieves a complexity of $O(\M(m)+\M(n))$
operation in $\K$ when $m\ge n$.
Our algorithm requires an output space of size $n-1$ for the remainder since taking any smaller size $r<n-1$ would rebind to power series division.

\begin{algorithm}
\caption{In-Place Euclidean Division (\textsc{InPlaceEuclDiv})}
\begin{algorithmic}[1]
  \Input $A, B\in\KX$ of sizes $(m + n,n)$, $m\ge n$, such that $B_{[0]}\neq 0$;
  \Output $Q, R\in\KX$ of sizes $(m+1,n-1)$ such that $A = BQ+R$;
  \Required  In-place $\textsc{DivErase}(F,G,n)$ computing $F/G \bmod X^n$ while erasing $F$; In-place \SP; 
  \Statex \emph{For simplicity, $H$ is a size-n polynomial  such that $H_{[0..n-1[}$ is $R$ and $H_{[n-1]}$ is an extra register}
  \State $H\gets A_{[m..m+n[}$
  \State $k\gets m+1$ 
  \While{$k > n$}
    \State $Q_{[k-n..k[}\gets \rev(\textsc{DivErase}(\rev(H),\rev(B),n))$ \label{eucldiv:diverase}
    \State $H_{[0..n-1[}\gets\SP(Q_{[k-n..k-1[}, B_{[0..n-1[})$ \label{eucldiv:sp}
    \State $H_{[1..n[}\gets A_{[k-n..k-1[}-H_{[0..n-1[}$
    \State $H_{[0]}\gets A_{[k-n-1]}$
    \State $k\gets k-n$
  \EndWhile
  \State $Q_{[0..k[}\gets \rev(\textsc{DivErase}(\rev(H_{[n-k..n[}),  \rev(B_{[n-k..n[})))$
  \State $H_{[0..n-1[}\gets\SP(Q_{[0..n-1[}, B_{[0..n-1[})$
  \State $H_{[0..n-1[}\gets A_{[0..n-1[} - H_{[0..n-1[}$
  \State \Return $(Q,H_{[0..n-1[})$
\end{algorithmic}
\label{alg:eucldiv}
\end{algorithm}

\begin{thm}\label{thm:complexity-Euclidean-division}
  Algorithm~\ref{alg:eucldiv} is correct. It uses $\bigO{1}$ extra space and $\left(\lambda_m(\frac{c+1}{2}+\frac{1}{c})+ \lambda_s(2+\frac{1}{c})\right)\M(m)+ \bigO{m \log n}$ operations in $\K$ where $c = \max(c_m+3,c_s+2)$.
\end{thm}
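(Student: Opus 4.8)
The plan is to prove correctness through a loop invariant that expresses the block-wise long division, and then to read off the space and time bounds from the cost of the subroutines.

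\textbf{Set-up and invariant.} Write $A$ of degree $m+n-1$ and $B$ of degree $n-1$; since $B$ has invertible leading coefficient, $\rev(B)$ has invertible constant term and every call to \textsc{DivErase} is well defined. Let $Q$ (degree $m$) and $R$ (degree $\le n-2$) be the true quotient and remainder, so $A=BQ+R$. I would prove the invariant that, at the top of the while loop with counter $k$, the coefficients $Q_{[k..m+1[}$ are already correct and $H$ holds the top $n$ coefficients of the \emph{running dividend} $A^{(k)}:=A-B\,Q_{[k..m+1[}=B\,Q_{[0..k[}+R$, a polynomial of degree $\le k+n-2$. At initialization $k=m+1$, $A^{(k)}=A$, and $H\gets A_{[m..m+n[}$ indeed holds its top $n$ coefficients.

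\textbf{Correctness of one block (the crux).} The heart of the argument is that the top $n$ quotient coefficients of $A^{(k)}\div B$ depend only on the top $n$ coefficients of $A^{(k)}$. Because $H$ stores exactly those coefficients in order, $\rev(H)\equiv\rev(A^{(k)})\pmod{X^n}$, hence $\rev(H)/\rev(B)\equiv\rev(A^{(k)})/\rev(B)\pmod{X^n}$, so $\textsc{DivErase}(\rev(H),\rev(B),n)$ returns the reversal of the next block, which the outer $\rev$ places correctly into $Q_{[k-n..k[}$. I would then check that \cref{eucldiv:sp} and the two following steps implement $A^{(k-n)}=A^{(k)}-B\,Q_{[k-n..k[}$: the size-$(n-1)$ product $\SP(Q_{[k-n..k-1[},B_{[0..n-1[})$ yields precisely the low $n-1$ coefficients of $B\,Q_{[k-n..k[}$ that modify coefficients $k-n+1,\dots,k-1$ of the running dividend, while $H_{[0]}\gets A_{[k-n-1]}$ restores the untouched bottom coefficient; this re-establishes the invariant with $k$ replaced by $k-n$. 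The post-loop block of size $k\le n$ is treated identically, now using the top $k$ coefficients $H_{[n-k..n[}$ and $B_{[n-k..n[}$; afterwards $\SP(Q_{[0..n-1[},B_{[0..n-1[})$ followed by a subtraction returns $R=(A-BQ)\bmod X^{n-1}$, which is correct because $\deg R<n-1$.

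\textbf{Space.} By design $H$ overlays the $n-1$ remainder output registers plus a single extra register, and every nontrivial step writes into pre-allocated output space: \textsc{DivErase} produces its result in the free block $Q_{[k-n..k[}$ while erasing its dividend $H$, and each \SP{} writes into $H$. Using the in-place erasing division of \cref{cor:division-inplace-erasing} and an in-place short product, each such call needs only $\bigO 1$ auxiliary registers, so the whole algorithm runs in $\bigO 1$ extra space.

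\textbf{Time and main obstacle.} The loop executes $\bigO{m/n}$ times, each iteration costing one precision-$n$ \textsc{DivErase} plus one size-$(n-1)$ short product, with one extra of each after the loop. Adding the short product's $\lambda_s\M(n)$ to the $\lambda_s(1+\tfrac1c)\M(n)$ term of \cref{cor:division-inplace-erasing} merges the $\lambda_s$ contributions into $\lambda_s(2+\tfrac1c)$, leaving $\lambda_m(\tfrac{c+1}{2}+\tfrac1c)$ for the middle-product part. Superlinearity with $\alpha=m/n\ge1$ gives $\tfrac mn\M(n)\le\M(m)$, so the $\bigO{m/n}$ copies of $\M(n)$ collapse to a single $\M(m)$, producing the claimed leading factor $\bigl(\lambda_m(\tfrac{c+1}{2}+\tfrac1c)+\lambda_s(2+\tfrac1c)\bigr)\M(m)$; the $\bigO n$ lower-order term of each division together with the $\bigO n$ bookkeeping subtractions, summed over the blocks, contribute $\bigO m$ and are absorbed into the stated $\bigO{m\log n}$. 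The main obstacle is the per-block correctness argument — justifying that only the top $n$ coefficients of the running dividend matter (the truncated reversal identity) and that the short-product update reproduces the exact long-division remainder, including the index bookkeeping at \cref{eucldiv:sp} and the final partial block of size $k\le n$. Once this invariant is established, the space and time claims follow routinely.
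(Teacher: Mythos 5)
Your proof is correct and follows essentially the same route as the paper's: block-wise long division where each size-$n$ chunk of the quotient is produced by the erasing in-place division of \cref{cor:division-inplace-erasing}, the remainder space plus one extra register holds the top $n$ coefficients of the running dividend, and the cost is obtained by summing one \textsc{DivErase} and one short product per block and collapsing the $\bigO{m/n}$ copies of $\M(n)$ into $\M(m)$ via superlinearity; you merely make explicit the loop invariant and truncated-reversal identity that the paper's proof leaves implicit. One cosmetic slip: the short-product update modifies coefficients $k-n,\dots,k-2$ of the running dividend (not $k-n+1,\dots,k-1$), which is what makes $H_{[0]}\gets A_{[k-n-1]}$ the correct untouched bottom coefficient of the new top-$n$ window.
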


\begin{proof}
  Algorithm~\ref{alg:eucldiv} is an adaptation of the classical \emph{long division algorithm}, recalled in
  Section~\ref{ssec:classical-complexity}, where chunks of the quotient are computed iteratively \emph{via} Euclidean division of
  size $(2n-1,n)$. The main difficulty is that the update of the dividend cannot be done on the input. Since we compute only
  chunks of size $n$ from the quotient, the update of the dividend affects only $n-1$ coefficients. Therefore, it is possible to
  use the space of $R$ for storing these new coefficients. As we need to consider $n$ coefficients from the dividend to get a new
  chunk, we add the missing coefficient from $A$ and consider the polynomial $H$ as our new dividend.

  By \cref{cor:division-inplace-erasing}, \cref{eucldiv:diverase} can be done in place while erasing $H$, which is not part of the
  original input. It is thus immediate that our algorithm is in-place. For the complexity, \cref{eucldiv:diverase,eucldiv:sp}
  dominate the cost.  Using the exact complexity for \cref{eucldiv:diverase} given in \cref{cor:division-inplace-erasing}, one can
  deduce easily that Algorithm~\ref{alg:eucldiv} requires
  $\left(\lambda_m(\frac{c+1}{2}+\frac{1}{c})+ \lambda_s(2+\frac{1}{c})\right)\M(m)+ \bigO{m \log n}$ operations in $\K$.
\end{proof}

Using time-efficient products with $\lambda_m = \lambda_s=1$, $c_m=4$ and $c_s=3$ yields a complexity $\simeq 6.29 \M(m)$, which
is roughly $6.29/4 = 1.57$ times slower than the most time-efficient out-of-place algorithm.

\section{Multipoint evaluation and interpolation}

In this section, we present in-place algorithms for the two related problems of multipoint evaluation and interpolation. We first
review both classical algorithms and their space-efficient variants.

\subsection{Space complexity of classical algorithms}\label{ssec:spacecomp-evalinterp}

\paragraph{Multipoint evaluation}
Given $n$ elements $a_1$, \dots, $a_n$ of $\K$ and a size-$n$ polynomial $F\in\KX$, multipoint evaluation aims to compute
$F(a_1)$, \dots, $F(a_n)$. While the naive approach using Horner scheme leads to a quadratic complexity, the fast approach of
\cite{BorodinMoenck:1972:FastMultiPointEval} reaches a quasi-linear complexity $\bigO{\M(n)\log(n)}$ using a divide-and-conquer
approach and the fact that $F(a_i)=F \bmod (X-a_i)$. As proposed in~\cite{Bostan:2003} this complexity can be sharpened to
$(\lambda_m+\frac{1}{2}\lambda_f)\M(n)\log(n)+\bigO{\M(n)}$ using the transposition principle.

The fast algorithms are based on building the so-called \emph{subproduct tree}~\cite[Chapter~10]{Gathen:2013} whose leaves contain
the $(X-a_i)$'s and whose root contains the polynomial $\prod_{i=1}^n (X-a_i)$.  This tree contains $2^i$ degree-$n/2^i$ monic
polynomials at level $i$, and can be stored in exactly $n\log n$ registers if $n$ is a power of two. The fast algorithms then
require $n\log(n)+\bigO{n}$ registers as work space.
Here, because the space complexity constants \(c_f,c_m,c_s\) do not appear in the leading term \(n\log(n)\) of space usage, we can
always choose the fastest underlying multiplication routines, so the computational cost for this approach is simply
\(\tfrac{3}{2}\M(n)\log(n)+\bigO{\M(n)}\).

As remarked in~\cite{vonzurGathen1992}, one can easily derive a fast variant that uses only $\bigO{n}$ extra space.  In
particular, \cite[Lemma~2.1]{vonzurGathen1992} shows that the evaluation of a size-$n$ polynomial $F$ on $k$ points
$a_1$, \dots, $a_k$ with $k\leq n$ can be done at a cost $\bigO{\M(k)(\frac{n}{k}+\log(k))}$ with $\bigO{k}$ extra space.

We provide a tight analysis of this algorithm, starting with the \emph{balanced case} $k=n$, \emph{i.e.}\ the number of evaluation points
is equal to the size of $F$.  The idea of the algorithm is to group the points in \(\ceil{\log(n)}\) groups of
\(\floor{n/\log(n)}\) points each, and to use standard multipoint evaluation on each group, by first reducing $F$ modulo the root
of the corresponding subproduct tree.  The complexity analysis of this approach is given in the following lemma. Observe that here
too, the constants $\lambda_s, c_s$, etc., do not enter in since we can always use the fastest out-of-place subroutines without
affecting the $\bigO{n}$ term in the space usage.

\begin{restatable}{lem}{mpeevallinspacebalanced}\label{lem:mpeval-linspace-balanced}
  Given \(F\in\KX\) of size $n$ and \(a_1,\ldots,a_n\in\K\),
  one can compute \(F(a_1),\ldots,F(a_n)\) using
  \(\tfrac{7}{2}\M(n)\log(n) + \bigO{\M(n)}\) operations in $\K$ and
  \(n+\bigO{\frac{n}{\log(n)}}\) extra registers.
\end{restatable}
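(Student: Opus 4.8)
The plan is to realize the grouping strategy sketched just before the statement: partition the $n$ points into $L=\lceil\log n\rceil$ consecutive groups $P_1,\dots,P_L$ of $k=\lfloor n/\log n\rfloor$ points each (letting the last group absorb the $\bigO{\log n}$ leftover points), and process the groups one at a time while recycling a single block of workspace. For a fixed group $P_j$ I would perform three sub-steps: (i) build the subproduct tree of $P_j$, whose root is the monic polynomial $M_j=\prod_{a\in P_j}(X-a)$ of degree $k$; (ii) reduce the input modulo this root, $R_j=F\bmod M_j$, a polynomial of degree $<k$; and (iii) run a fast multipoint evaluation of the size-$k$ polynomial $R_j$ on the $k$ points of $P_j$, writing the results straight into the output slots of that group. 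Correctness is immediate since $F(a)=R_j(a)$ for every $a\in P_j$, so the whole argument is complexity bookkeeping. Following the remark preceding the lemma, since the space constants $c_f,c_m,c_s$ never reach the leading $n$ term I am free to instantiate the fastest out-of-place subroutines, i.e.\ $\lambda_f=\lambda_m=\lambda_s=1$.

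For the space bound, the dominant object is the subproduct tree of one group. A tree on $k$ leaves occupies at most $k\lceil\log k\rceil$ registers, and $k\lceil\log k\rceil\le(n/\log n)(\log n+1)=n+\bigO{n/\log n}$, which already accounts for the leading $n$ term. Everything else is $\bigO{k}=\bigO{n/\log n}$: storing $R_j$, the $(c_s+2)k$ auxiliary registers demanded by the remainder computation of \Cref{lem:remainder}, and the workspace of the evaluation phase. Because the groups are handled sequentially, this entire block (tree included) is allocated once and reused, giving the claimed $n+\bigO{n/\log n}$ extra registers; I would keep the ceiling in $k\lceil\log k\rceil$ explicit, since it is exactly what produces the $\bigO{n/\log n}$ slack.

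For the time bound I would split the cost per group into (a) a fast balanced multipoint evaluation of a size-$k$ polynomial on $k$ points, whose cost is $\tfrac{3}{2}\M(k)\log k+\bigO{\M(k)}$ by the Bostan bound (this is precisely the tree building plus the transposed going-down phase, and its root doubles as the modulus $M_j$ needed in step (ii)); and (b) the reduction $F\bmod M_j$, which by \Cref{lem:remainder} with dividend size $n$ and divisor size $k+1$ costs $2\M(n)+3\M(k)+\bigO{n}$. The single computation I would carry out carefully is the repeated use of super-linearity $\alpha\M(n)\le\M(\alpha n)$ with $\alpha=\log n$, which gives $\M(k)=\M(n/\log n)\le\M(n)/\log n$. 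Summing (a) over the $L=\lceil\log n\rceil$ groups yields $\tfrac{3}{2}\M(n)\log n+\bigO{\M(n)}$, while summing (b) yields $2\M(n)\log n+3\M(k)\log n+\bigO{n\log n}=2\M(n)\log n+\bigO{\M(n)}$. Adding the two gives $(\tfrac{3}{2}+2)\M(n)\log n+\bigO{\M(n)}=\tfrac{7}{2}\M(n)\log n+\bigO{\M(n)}$.

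The main obstacle is quantitative rather than conceptual: nailing the leading constant $\tfrac{7}{2}$ requires that every secondary contribution genuinely collapse. Concretely I must verify that each $\M(k)\log n$ term is absorbed into $\bigO{\M(n)}$ (which is exactly what $\M(n/\log n)\le\M(n)/\log n$ delivers) and that the $\bigO{n\log n}$ of pointwise subtractions is non-dominant, i.e.\ $\bigO{\M(n)}$ — this holds both in the quasi-linear regime and for $\M(n)=n^{1+\gamma}$. The secondary bookkeeping to handle cleanly is that $\lceil\log n\rceil$ groups of $\lfloor n/\log n\rfloor$ points need not exactly tile the $n$ points, so I would carry the floors and ceilings explicitly rather than hide them, checking that the exceptional last group changes none of the leading terms.
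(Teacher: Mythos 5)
Your proposal is correct and follows essentially the same route as the paper: the same grouping into $\lceil\log n\rceil$ groups of $\lfloor n/\log n\rfloor$ points, the same three sub-steps per group (tree, reduction, evaluation down the tree), and the same constant accounting, since your $\tfrac{3}{2}\M(k)\log k$ Bostan bundle is exactly the paper's $\tfrac{1}{2}$ (tree) plus $1$ (down-sweep), and your use of \cref{lem:remainder} for the reduction gives the same leading $2\M(n)$ per group as the paper's balanced Euclidean division. The only differences are cosmetic packaging of the same estimates, so nothing further is needed.
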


\begin{proof} \sloppy
  Computing each subproduct tree on \(\bigO{n/\log(n)}\) points can be done in time
  $\tfrac{1}{2}\M(n/\log(n))\log(n) \le \tfrac{1}{2}\M(n)$ and space \(n + \bigO{n/\log(n)}\).  The root of this tree is a
  polynomial of degree at most \(n/\log(n)\).  Each reduction of $F$ modulo such a polynomial takes time
  \(2\M(n) + \bigO{n/\log(n)}\) and space \(\bigO{n/\log(n)}\) using the balanced Euclidean division
  algorithm
  from \cref{ssec:classical-complexity}.  Each multi-point evaluation of the reduced polynomial on \(n/\log(n)\) points, using the
  pre-computed subproduct tree, takes $\M(n/\log(n))\log(n) + \bigO{\M(n/\log(n))}$ operations in $\K$ and \(\bigO{n/\log(n)}\)
  extra space~\cite{Bostan:2003}.

  All information except the evaluations from the last step --- which are written directly to the output space --- may be
  discarded before the next iteration begins. Therefore the total time and space complexity are as stated.
\end{proof}

When the number of evaluation points $k$ is large compared to the size $n$ of the polynomial $F$, we can simply repeat the
approach of \cref{lem:mpeval-linspace-balanced} \(\ceil{k/n}\) times.  The situation is more complicated when $k\le n$, because
the output space is smaller.  The idea is to compute the degree-$k$ polynomial $M$ at the root of the product tree, reduce $F$
modulo $M$ and perform balanced $k$-point evaluation of $F\bmod M$.

\begin{restatable}{lem}{mpevallinspace}\label{lem:mpeval-linspace}
  Given $F\in\KX$ of size $n$ and $a_1$, \dots, $a_k\in\K$, one can compute $F(a_1)$, \dots, $F(a_k)$ using
  \(2\lambda_s\M(n) + 4\M(k)\log(k) + \bigO{n+\M(k)\loglog(k)}\)
  operations in $\K$ and
  \((c_s+2)k + \bigO{k/\log(k)}\)
  extra registers.
\end{restatable}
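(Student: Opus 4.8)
\section*{Proof proposal}

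The plan is to reduce this unbalanced instance (\(k\le n\)) to a balanced one, so that \cref{lem:mpeval-linspace-balanced} can be applied. Set \(M=\prod_{i=1}^k(X-a_i)\), a monic polynomial of degree \(k\). Since \((X-a_i)\mid M\) for every \(i\), we have \(F\equiv F\bmod M\pmod{X-a_i}\), hence \(F(a_i)=(F\bmod M)(a_i)\). Writing \(G=F\bmod M\), which has size \(k\), it therefore suffices to (i) build \(M\), (ii) compute \(G\), and (iii) evaluate the size-\(k\) polynomial \(G\) at the \(k\) points via \cref{lem:mpeval-linspace-balanced}; the resulting values are the final output.

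The first and most delicate step is to construct the degree-\(k\) root \(M\) in only \(\bigO{k}\) space, rather than the \(k\log k\) registers needed to store a full subproduct tree. I would group the \(k\) points into \(\log k\) blocks of \(k/\log k\) points, build the subproduct-tree root of each block in turn (each such root has degree \(k/\log k\), and its tree fits in \(\bigO{k}\) space, which is reused between blocks), and then multiply the \(\log k\) block-roots together through a balanced product tree on these \(\log k\) factors. Summing \(\tfrac12\M(k/\log k)\log k\) over the \(\log k\) blocks gives \(\tfrac12\M(k)\log k\) for the block roots, while the final product tree has \(\loglog k\) levels, each of total size \(\bigO{k}\) and hence cost \(\bigO{\M(k)}\), which accounts for the \(\bigO{\M(k)\loglog k}\) term. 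I would store \(M\) in the \(k\) output registers, which are still free at this stage.

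With \(M\) available, compute \(G=F\bmod M\) by \cref{lem:remainder} applied to \(A=F\) (size \(n\)) and the monic \(B=M\) (size \(k+1\)); this costs \(2\lambda_s\M(n)+\bigO{\M(k)}\) operations and \((c_s+2)k\) extra registers. To keep the peak space at \((c_s+2)k\), I would let the remainder overwrite \(M\) in place in the output registers, so that \(M\) and \(G\) never occupy separate arrays. Finally, move \(G\) into \(k\) auxiliary registers and run \cref{lem:mpeval-linspace-balanced} on \(G\) and the \(k\) (read-only) points, writing the evaluations to the now-free output registers; this costs \(\tfrac72\M(k)\log k+\bigO{\M(k)}\) operations and \(k+\bigO{k/\log k}\) extra registers, so its peak extra usage is \(2k+\bigO{k/\log k}\). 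Adding the three time contributions gives \(2\lambda_s\M(n)+(\tfrac12+\tfrac72)\M(k)\log k+\bigO{n+\M(k)\loglog k}=2\lambda_s\M(n)+4\M(k)\log k+\bigO{n+\M(k)\loglog k}\), and the peak space is \(\max((c_s+2)k,\,2k)+\bigO{k/\log k}=(c_s+2)k+\bigO{k/\log k}\), using \(c_s\ge 0\).

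I expect the main obstacle to lie in the space bookkeeping rather than the algebra. Two points require care: realizing the degree-\(k\) product \(M\) within linear space while keeping the leading constant at \(\tfrac12\) and confining the unavoidable overhead to the \(\bigO{\M(k)\loglog k}\) cost of combining the \(\log k\) block roots; and arranging the space reuse so that \(M\), the remainder \(G\), the \((c_s+2)k\) workspace of \cref{lem:remainder}, and the workspace of the balanced evaluation all share registers, ensuring that no phase exceeds \((c_s+2)k+\bigO{k/\log k}\). In particular, computing the remainder in place over the storage of \(M\) is what avoids an extra additive \(k\) in the space bound.
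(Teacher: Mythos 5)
Your reduction is exactly the paper's: build the degree-$k$ root $M$ of the product tree in linear space, reduce $F$ modulo $M$ via \cref{lem:remainder}, and finish with the balanced evaluation of \cref{lem:mpeval-linspace-balanced}. The time accounting ($\tfrac12\M(k)\log k$ for $M$, $2\lambda_s\M(n)+\bigO{n+\M(k)}$ for the reduction, $\tfrac72\M(k)\log k+\bigO{\M(k)}$ for the evaluation) matches the paper's, and your blocked construction of $M$ --- per-block subproduct trees whose space is reused, then a combining tree of depth $\loglog k$ --- is an equivalent rendering of the paper's two-phase scheme, which runs the tree level by level with fast out-of-place products on the bottom levels and switches to the in-place products of~\cite{GioGreRo19} for the top $\loglog$ levels. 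One detail you leave unspecified and should fix: the combining multiplications must themselves be the in-place products (or be buffered with equal care), exactly as the paper prescribes, since an out-of-place product at the topmost level drags in a $c_f\cdot\Theta(k)$ workspace that your budget does not cover.

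The genuine problem is the step you yourself flag as load-bearing: letting the remainder overwrite $M$. This cannot work with \cref{lem:remainder}. The algorithm behind that lemma is the chunked long division (cf.\ \cref{alg:eucldiv}): the divisor $B=M$ is read in \emph{every} balanced-division step, including the last one, while the $k$ registers provided for the remainder are not idle until then --- they hold the running dividend $H$ from the first iteration onward. So the divisor's registers and the remainder's registers must be disjoint throughout; nor can the final short product deposit its result on top of $M$, since the in-place \SP routines require output disjoint from their (read-only) inputs. Without the overwrite, your own allocation gives, at the peak (the reduction phase), $k$ registers for $M$, $k$ for the remainder lemma's output, and $(c_s+2)k$ of workspace; even letting one of these $k$-blocks live in the evaluation output space, this is $(c_s+3)k$ extra registers, one $k$ more than the stated bound. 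Note that the paper's proof does not attempt any such overlap: it keeps $M$ and the remainder in disjoint blocks, reuses each phase's workspace in the next, and simply combines the per-phase bounds --- the storage of $M$ during the call to \cref{lem:remainder} being the one item its terse accounting leaves implicit. If you want the constant $(c_s+2)$ rigorously, the argument has to come from a finer look at the workspace of \cref{lem:remainder} (e.g., sharing its quotient-chunk block with other storage), not from overwriting the divisor.
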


\begin{proof}
  Computing the root $M$ of a product tree proceeds in two phases. For the bottom levels of the tree, we use the fastest
  out-of-place full multiplication algorithm that computes the product of two size-$t$ polynomials in time $\M(t)$ and space
  \(\bigO{t}\).  Then, only for the top \(\loglog(n)\) levels, do we switch to an in-place full product algorithm 
  from~\cite{GioGreRo19}, which has time \(\bigO{\M(t)}\) but only \(\bigO{1}\) extra space.  The result is that $M$ can be computed
  using \(\tfrac{1}{2}\M(k)\log(k)+\bigO{\M(k)\loglog(k)}\) operations in $\K$ and \(k+\bigO{k/\log(k)}\) registers.

  Then, we reduce $F$ modulo $M$.  By \cref{lem:remainder}, this is accomplished in time \(2\lambda_s\M(n) + \bigO{n + \M(k)}\)
  and space \((c_s+2)k\).  Adding the cost of the $k$-point evaluation of \cref{lem:mpeval-linspace-balanced} completes the proof.
\end{proof}

\paragraph{Interpolation}
Interpolation is the inverse operation of multipoint evaluation, that is, to reconstruct a size-$n$ polynomial $F$ from its
evaluations on $n$ distinct points $F(a_1)$, \dots, $F(a_n)$. The classic approach using Lagrange's interpolation formula has a
quadratic complexity~\cite[Chapter~5]{Gathen:2013} while the fast approach of~\cite{BorodinMoenck:1972:FastMultiPointEval} has
quasi-linear time complexity $\bigO{\M(n)\log(n)}$. We first briefly recall this fast algorithm.

Let $M(X)=\prod_{i=1}^n (X-a_i)$ and $M'$ its derivative. Noting that $\frac{M}{X-a_i}(a_i) = M'(a_i)$ for $1\le i\le n$, we have
\begin{equation}\label{eq:fastinterpolation}
  F(X)=M(X)\sum_{i=1}^n \frac{F(a_i)/M'(a_i)}{X-a_i}.
\end{equation}
Hence the fast algorithm of~\cite{BorodinMoenck:1972:FastMultiPointEval} consists in computing $M'(X)$ and its evaluation on each
$a_i$ through multipoint evaluation, and then to sum the $n$ fractions using a divide-and-conquer strategy. The numerator of the
result is then $F$ by~Equation~\eqref{eq:fastinterpolation}.

If the subproduct tree over the \(a_i\)'s is already computed, this gives all the denominators in the rational fraction sum.
Using the same subproduct tree for evaluating $M'$ and for the rational fraction sum gives the fastest interpolation algorithm,
combining the textbook method~\cite{Gathen:2013} with the multi-point evaluation of~\cite{Bostan:2003}.  The total computational
cost is only \(\tfrac{5}{2}\M(n)\log(n) + \bigO{\M(n)}\), while the space is dominated by the size of this subproduct tree,
\(n\log(n) + \bigO{n}\).

A more space-efficient approach can be derived using linear-space multipoint evaluation. Since the subproduct must be essentially
recomputed on the first and last steps, the total running time is \((2\lambda_f + \tfrac{7}{2})\M(n)\log(n) + \bigO{\M(n)}\),
using \((2 + \tfrac{1}{2}c_f)n + \bigO{n/\log(n)}\) registers.  This approach can be improved in two ways: first by again grouping
the interpolation points and re-using the smaller subproduct trees for each group, and secondly by using an in-place full
multiplication algorithm from~\cite{GioGreRo19} to combine the results of each group in the rational function summation. A
detailed description of the resulting algorithm, along with a proof of the following lemma, can be found in \cref{app:interp-linspace}.

\begin{restatable}{lem}{interplinspace}\label{lem:interp-linspace}
  Given \(a_1,\ldots,a_n\in\K\) and \(y_1,\ldots,y_n\in\K\),
  one can compute \(F\in\KX\) of size $n$ such that
  \(F(a_i)=y_i\) for \(1\le{}i\le{}n\) using
  \(5\M(n)\log(n) + \bigO{\M(n)\loglog(n)}\)
  operations in $\K$ and
  \(2n + \bigO{n/\log(n)}\)
  extra registers.
\end{restatable}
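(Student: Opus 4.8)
The plan is to carry the three classical phases of fast interpolation --- building the subproduct tree, evaluating $M'$ down the tree, and summing the partial fractions of Equation~\eqref{eq:fastinterpolation} up the tree --- onto a \emph{two-level} tree, in the same spirit as \cref{lem:mpeval-linspace-balanced}. First I would split the $n$ abscissae into $g=\ceil{\log(n)}$ groups of $s=\floor{n/\log(n)}$ points, so that an ordinary subproduct tree $T_j$ on a single group occupies only $n+\bigO{n/\log(n)}$ registers. The top of the global tree then consists solely of the $g$ group roots $M_j=\prod_{i\in\text{group }j}(X-a_i)$ and their product $M=\prod_j M_j$; since this part has only $\bigO{\loglog(n)}$ levels and total size $\bigO{n}$, I would form $M$ (and hence $M'$ by a cost-free differentiation) from the roots with the in-place full multiplication of~\cite{GioGreRo19}, at a negligible $\bigO{\M(n)\loglog(n)}$. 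Obtaining each root $M_j$ as the top of its group tree costs, summed over the groups, $\tfrac12\M(n)\log(n)$, exactly as in the evaluation analysis.

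Next I would make a single pass over the groups, rebuilding the group tree $T_j$ once (another $\tfrac12\M(n)\log(n)$ overall) and reusing it for both remaining phases. To produce the scalars $c_i=y_i/M'(a_i)$ for the points of the group, I first reduce $M'$ modulo the root $M_j$ by the balanced Euclidean division of \cref{ssec:classical-complexity} --- contributing $2\M(n)\log(n)$ across all groups --- and then run the transposed down-tree evaluation of $M'\bmod M_j$ on the $s$ points through $T_j$, contributing $\M(n)\log(n)$. With the $c_i$ available I would run the partial-fraction summation \emph{up} the same tree $T_j$, yielding the group numerator $N_j$ characterised by $N_j/M_j=\sum_{i\in\text{group }j}c_i/(X-a_i)$, again for $\M(n)\log(n)$ in total. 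Finally the $g$ group fractions are merged according to $F=\sum_j N_j\,(M/M_j)$ by a balanced fraction summation over the $g$ super-leaves, which because there are only $g=\log(n)$ of them costs a further $\bigO{\M(n)\loglog(n)}$. Adding the leading contributions $\tfrac12+\tfrac12+2+1+1$ gives the stated $5\M(n)\log(n)$, all tree constructions and root combinations being absorbed into the $\bigO{\M(n)\loglog(n)}$ term.

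The main obstacle is not the running time but the space bookkeeping, which is why a detailed algorithm is relegated to \cref{app:interp-linspace}. The tension is that $M'$ must remain available while \emph{every} group is reduced against it, yet the current group tree $T_j$ alone already fills $n+\bigO{n/\log(n)}$ registers, and the group data $(N_j,M_j)$ must survive until the final merge; handled naively this would demand three or four size-$n$ arrays. I would hold the budget to $2n+\bigO{n/\log(n)}$ by keeping $M'$ in one size-$n$ buffer and the live tree $T_j$ in a second, parking the scalars $c_i$ and the numerators $N_j$ in the as-yet-unwritten output registers (the {\small WS} convention used throughout), and performing the concluding fraction merge with the in-place product of~\cite{GioGreRo19}, which is precisely what removes the extra linear array that a direct recombination would need. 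Verifying that each subroutine --- balanced division, down-tree evaluation, up-tree summation, and the final merge --- stays within its allotted window, and that every overwrite is consistent with the data still required by later groups, is the technical heart of the argument and where I expect most of the appendix's effort to go.
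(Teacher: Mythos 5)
Your plan reproduces the paper's algorithm almost exactly --- the same grouping into $\ceil{\log(n)}$ groups of $\floor{n/\log(n)}$ points, the same use of the in-place products of~\cite{GioGreRo19} for the $\bigO{\loglog(n)}$ top levels, and the same $\tfrac12+\tfrac12+2+1+1=5$ cost breakdown --- but your \emph{scheduling} of the reductions breaks the space bound, and this is exactly where the two arguments diverge. You keep $M'$ alive in one size-$n$ buffer throughout the per-group pass, interleaving the reduction $M'\bmod M_j$ with the rebuild of $T_j$, the evaluation, and the up-tree summation. Count the live data while the last group is being processed: $M'$ ($n$ registers), the current tree $T_j$ ($\approx n$ registers), the denominators $M_1,\ldots,M_g$, which you yourself note must survive until the final merge ($\approx n$ registers in total), and the numerators already produced ($\approx n$ registers). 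That is $\approx 4n$ simultaneously live registers against the $3n+\bigO{n/\log(n)}$ available ($n$ of output plus the claimed $2n+\bigO{n/\log(n)}$ extra). Your explicit layout confirms the deficit: buffer one holds $M'$, buffer two holds $T_j$, the output holds the $c_i$'s and the $N_j$'s --- and no region at all is left for the $M_j$'s.

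The paper resolves this by reordering, not by cleverer packing: all $\ceil{\log(n)}$ balanced reductions $M'\bmod M_j$ are performed back to back in a separate phase, while \emph{no} group tree is in memory (live data: $M'$, the denominators, and the accumulating remainders, which fit in $3n$), and $M'$ is discarded \emph{before} any group tree is rebuilt. In the subsequent per-group pass, each group consumes its size-$s$ remainder and produces a size-$s$ numerator, so remainders and numerators together always occupy $\approx n$ registers, and the live total (denominators, remainders/numerators, current tree) stays within $3n+\bigO{n/\log(n)}$. Without this hoisting --- or some substitute such as recomputing the $M_j$'s during the final merge, which would add $\tfrac12\M(n)\log(n)$ and push the constant past $5$ --- your single-pass schedule cannot meet the stated $2n+\bigO{n/\log(n)}$ bound.
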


\subsection{In-place multipoint evaluation}
In order to derive an in-place algorithm we make repeated use of the unbalanced multi-point evaluation with linear space to
compute only $k$ evaluations of the polynomial $F$ among the $n$ original points. The strategy is to set $k$ as a fraction of $n$
to ensure that $n-k$ is large enough to serve as extra space. Applying this strategy on smaller and smaller values of $k$ leads to
\cref{alg:mpeval}, which is an in-place algorithm with the same asymptotic time complexity $\bigO{\M(n)\log(n)}$ as out-of-place
fast multipoint evaluation.

\begin{algorithm}
  \caption{In-Place Multipoint Evaluation (\textsc{InPlaceEval})}
\begin{algorithmic}[1]
  \Input $F\in\KX$ of size $n$ and $(a_1,\dots,a_n)\in\K^n$;
  \Output $R=(F(a_1),\dots, F(a_n))$
  \Required \textsc{Eval} of space complexity $\le(c_s+2)k$ as in \cref{lem:mpeval-linspace}
  \State $s\gets 0$,\quad $k\gets\floor{n/(c_s+3)}$
  \While{$k>0$}
    \State $R_{[s..s+k[} \gets \textsc{Eval}(F, a_s, \dotsc, a_{s+k})$
      \label{step:mpeval:eval}
      \Comment \WS $R_{[s+k..n[}$
    \State $s \gets s+k$
    \State $k\gets \floor{\frac{n-s}{c_s+3}}$
  \EndWhile
  \State $R_{[s..n[} \gets \textsc{Eval}(F,a_s,\dotsc,a_{n})$ \Comment constant space
\end{algorithmic}
\label{alg:mpeval}
\end{algorithm}

\begin{thm}\label{thm:inplace-multieval}
  Algorithm~\ref{alg:mpeval} is correct. It uses $\bigO{1}$ extra space and
  \(\left(4 + 2\lambda_s/\log(\frac{c_s+3}{c_s+2})\right)\!\M(n)\log(n) + \bigO{\M(n)\loglog n}\) operations in $\K$.
\end{thm}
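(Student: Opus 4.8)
The plan is to establish correctness, then the $\bigO 1$ space bound, and finally to sum the per-call costs of \textsc{Eval} over the iterations of the loop. For correctness I would argue that the blocks written by the successive \textsc{Eval} calls tile the output exactly: each call at \cref{step:mpeval:eval}, and the final call, evaluates the full polynomial $F$ on a contiguous block of points and writes the results into $R_{[s..s+k[}$; since $s$ advances by $k$ and the loop is followed by a call covering $R_{[s..n[}$, the written blocks partition $R_{[0..n[}$, so each $a_j$ is evaluated once into its correct slot, and the correctness of each block is exactly \cref{lem:mpeval-linspace}.

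For the space bound I would check that the free tail of the output always suffices as the workspace of \textsc{Eval}. By \cref{lem:mpeval-linspace} a call on $k$ points needs $(c_s+2)k+\bigO{k/\log k}$ extra registers, which here are supplied by the still-unwritten tail $R_{[s+k..n[}$ of size $n-s-k$. The choice $k=\floor{(n-s)/(c_s+3)}$ forces $(c_s+3)k\le n-s$, i.e.\ $(c_s+2)k\le n-s-k$, so the dominant part of the workspace fits; the lower-order $\bigO{k/\log k}$ term is absorbed by taking $k$ marginally smaller, without affecting the asymptotics. Since the algorithm itself stores only the integer markers $s$ and $k$, the extra space is $\bigO 1$.

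For the time bound I would set $t_i=n-s_i$ (remaining points at iteration $i$), so that $t_0=n$ and, up to flooring, $t_{i+1}=t_i-\floor{t_i/(c_s+3)}=t_i\cdot\tfrac{c_s+2}{c_s+3}$; hence $t_i=\Theta(n(\tfrac{c_s+2}{c_s+3})^i)$ and the loop runs $\delta=\log_{\frac{c_s+3}{c_s+2}}(n)+\bigO 1$ times before the $\bigO 1$ leftover points are handled by the final call. By \cref{lem:mpeval-linspace}, iteration $i$ costs $2\lambda_s\M(n)+4\M(k_i)\log(k_i)+\bigO{n+\M(k_i)\loglog(k_i)}$ with $k_i=\floor{t_i/(c_s+3)}$. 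The crucial point is that the reduction of $F$ always acts on the full size-$n$ input, so the term $2\lambda_s\M(n)$ recurs in every iteration and accumulates to $2\lambda_s\M(n)\,\delta=\tfrac{2\lambda_s}{\log((c_s+3)/(c_s+2))}\M(n)\log(n)+\bigO{\M(n)}$. For the remaining terms I would use that the blocks partition the points, so $\sum_i k_i=n$, and that $\M$ is super-additive (a consequence of $\alpha\M(n)\le\M(\alpha n)$), whence $\sum_i\M(k_i)\le\M(n)$; bounding $\log(k_i)\le\log(n)$ and $\loglog(k_i)\le\loglog(n)$ gives $\sum_i 4\M(k_i)\log(k_i)\le 4\M(n)\log(n)$ and $\sum_i\bigO{\M(k_i)\loglog(k_i)}=\bigO{\M(n)\loglog(n)}$, while the $\bigO n$ terms sum to $\bigO{n\log n}$. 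Adding the three contributions yields exactly $(4+2\lambda_s/\log(\tfrac{c_s+3}{c_s+2}))\M(n)\log(n)+\bigO{\M(n)\loglog n}$.

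The main obstacle I anticipate is the bookkeeping of this recurring $2\lambda_s\M(n)$ cost: because every call re-reduces the whole polynomial $F$, it cannot be charged against the partition $\sum_i k_i=n$ and is instead paid once per iteration, which is precisely what produces the extra $\log(n)$ factor and pins down the leading constant $2\lambda_s/\log(\tfrac{c_s+3}{c_s+2})$. A secondary subtlety is invoking super-additivity of $\M$ so that $\sum_i\M(k_i)$ collapses to $\M(n)$ rather than $\M(n)\log n$, and confirming that the lower-order workspace term $\bigO{k/\log k}$ does not break the in-place property.
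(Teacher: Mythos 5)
Your proof is correct and follows essentially the same route as the paper's: the geometric decrease of the remaining points gives at most $\log(n)/\log(\frac{c_s+3}{c_s+2})$ iterations, the recurring $2\lambda_s\M(n)$ reduction cost is paid once per iteration and produces the $2\lambda_s/\log(\frac{c_s+3}{c_s+2})$ factor, and the partition $\sum_i k_i\le n$ together with super-additivity of $\M$ collapses the $4\M(k_i)\log(k_i)$ terms to $4\M(n)\log(n)$. Your explicit handling of the lower-order $\bigO{k/\log k}$ workspace term of \cref{lem:mpeval-linspace} is in fact slightly more careful than the paper, which silently identifies the lemma's space bound with the $(c_s+2)k$ requirement stated in the algorithm.
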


\begin{proof}
  The correctness is obvious as soon as \textsc{Eval} is correct.  By the choice of $k$ and from the extra space bound of
  \textsc{Eval} from \cref{lem:mpeval-linspace}, \cref{step:mpeval:eval} has sufficient work space, and therefore the entire
  algorithm is in-place.
  The sequence \(k_i = \frac{(c_s+2)^{i-1}}{(c_s+3)^i}n\), for $i=1,2,\ldots$, gives the values of $k$ in each iteration.  Then
  \(\sum_i k_i \le n\) and the loop terminates after at most $\ell\log(n)$ iterations, where
  \(\ell \le 1/\log(\frac{c_s+3}{c_s+2})\).  Applying \cref{lem:mpeval-linspace}, the cost of the entire algorithm is therefore
  dominated by \(\sum_{1\le{}i\le\ell} \left(2\lambda_s\M(n) + 4\M(k_i)\log(k_i)\right)\), which is at most
  \((2\lambda_s\ell+4)\M(n)\log(n)\).
\end{proof}

Using a time-efficient short product with $\lambda_s=1$ and $c_s=3$ yields a complexity $\simeq 11.61 \M(n)\log n$, which is
roughly $11.61/1.5 = 7.74$ times slower than the most time-efficient out-of-place algorithm. 

\subsection{In-place interpolation}

Let \((a_1,y_1),\ldots,(a_n,y_n)\) be $n$ pairs of evaluations, with the $a_i$'s pairwise distinct. Our goal is to compute the
unique size-$n$ polynomial \(F\in\KX\) such that \(F(a_i)=y_i\) for \(1\le{}i\le{}n\), with an in-place algorithm.  Our first aim
is to provide a variant of polynomial interpolation that computes $F\bmod X^k$ using $\bigO{k}$ extra space.  Without loss of
generality, we assume that $k$ divides $n$. For $i=1$ to $n/k$, let $T_i= \prod_{j=1+k(i-1)}^{ki} (X-a_j)$ and $S_i=M/T_i$ where
$M = \prod_{i=1}^n (X-a_i)$. Note that $S_i = \prod_{j\neq i} T_j$.  One can rewrite \cref{eq:fastinterpolation} as
\begin{equation}\label{eq:fastinterpolation2}
  F(X) = M(X)\adjustlimits{\sum^{n/k}}_{i=1} {\sum^{ki}}_{j=1+k(i-1)} \frac{F(a_j)}{M'(a_j)}\frac{1}{(X-a_j)}
       = M(X)\sum_{i=1}^{n/k} \frac{N_i(X)}{T_i(X)} = \sum_{i=1}^{n/k}  N_i(X)S_i(X)
\end{equation}
for some size-$k$ polynomials $N_1$, \dots, $N_{n/k}$.  One may remark that the latter equality can also be viewed as an instance
of the chinese remainder theorem where $N_i=F/S_i \bmod T_i$ (see~\cite[Chapter~5]{Gathen:2013}).  To get the first $k$ terms of
the polynomial $F$, we only need to compute
\begin{equation}\label{eq:fmodxk}
  F \bmod X^k= \sum_{i=1}^{n/k} N_i (S_i \bmod X^k) \bmod X^k.
\end{equation}
One can observe that $M'(a_j)=(S_i \bmod T_i)(a_j)T_i'(a_j)$ for $k(i-1)<j\leq ki$. Therefore, \cref{eq:fastinterpolation2}
implies that $N_i$ is the unique size-$k$ polynomial satisfying $N_i(a_j)=(F/S_i \bmod T_i)(a_j)$ and can be computed using
interpolation.  One first computes $S_i \bmod T_i$, evaluates it at the $a_j$'s, performs $k$ divisions in $\K$ to get each
$N_i(a_j)$ and finally interpolates $N_i$.

\medskip
Our second aim is to generalize the previous approach when some initial coefficients of $F$ are known.  Writing
$F = G + X^s H$ where $G$ is known, we want to compute $H\bmod X^k$ from some evaluations of $F$.  Since $H$ has size at
most $(n-s)$, only $(n-s)$ evaluation points are needed. Therefore, using \cref{eq:fastinterpolation2} with
$M=\prod_{i=1}^{n-s}(X-a_i)$, we can write
  \begin{equation}\label{eq:fastinterpolation3}
    H(X) = M(X) \adjustlimits{\sum^{({n-s})/{k}}}_{i=1} {\sum^{ki}}_{j=1+k(i-1)} \frac{F(a_j) - G(a_j)}{a_j^sM'(a_j)}\frac{1}{(X-a_j)}.
  \end{equation}
  This implies that $H \bmod X^k$ can be computed using the same approach described above by replacing $F(a_j)$ with
  $H(a_j)=(F(a_j) -G(a_j))/a_j^s$. We shall remark that the $H(a_j)$'s can be computed using multipoint evaluation and fast
  exponentation.  Algorithm~\ref{alg:partinterpol} fully describes this approach.

\begin{algorithm}
\caption{Partial Interpolation (\textsc{PartInterpol})}
\begin{algorithmic}[1]
  \Input $G\in\KX$ of size $s$ and $(y_1,\dots,y_{n-s})$, $(a_1,\dots,a_{n-s})$ in $\K^{n-s}$; an integer $k\le n-s$
  \Output $H\bmod X^k$ where $F=G+X^{s}H\in\KX$ is the unique size-$n$ polynomial s.t. $F(a_i) = y_i$ for $1\leq i\leq n-s$
  \For{$i = 1$ to $(n-s)/k$}
    \State $S_i^k\gets 1$, $S_i^T\gets 1$ \label{partinterpol:init}
    \State $T_i\gets\prod_{j=1+k(i-1)}^{ki} (X-a_j)$    \Comment Fast divide-and-conquer
    \For{$j = 1$ to $(n-s)/k$, $j\neq i$}
        \State $T_j\gets\prod_{t=1+k(j-1)}^{kj} (X-a_t)$    \Comment Fast divide-and-conquer
        \State $S_i^k\gets S_i^k\times T_j\mod X^k$         \Comment $S_i^k = S_i\bmod X^k$
                                                            \label{partinterpol:Sik}
        \State $S_i^T\gets S_i^T\times T_j\mod T_i$         \Comment $S_i^T = S_i\bmod T_i$
                                                            \label{partinterpol:SiT}
    \EndFor
    \State $G^T \gets G \bmod T_i$ \label{partinterpol:mod}
    \State $(b_1,\dotsc,b_k)\gets\textsc{Eval}(S_i^T,a_{1+k(i-1)},\dotsc, a_{ki})$ \label{partinterpol:eval}
    \Statex \hspace{.485cm}$(z_1,\dotsc,z_k)\gets\textsc{Eval}(G^T,a_{1+k(i-1)},\dotsc, a_{ki})$
    \For{$j=1$ to $k$} 
        \State $b_j\gets (y_{j+k(i-1)}-z_j)/(a_{j+k(i-1)}^s b_j)$ 
    \EndFor
    \State $N_i\gets\textsc{Interpol}((z_1,\dotsc, z_k), (b_1,\dotsc,b_k))$ \label{partinterpol:interpol}
    \State $H_{[0..k[}\gets H_{[0..k[} + N_i S_i^k \bmod X^k$
  \EndFor 
\end{algorithmic}
\label{alg:partinterpol}
\end{algorithm}

\begin{lem}\label{lem:PartInterpolation} \sloppy
  Algorithm~\ref{alg:partinterpol} is correct. It requires $6k+\bigO{k/\log k}$ extra space and it uses
  $\left(\frac{1}{2}(\frac{n-s}{k})^2+ \frac{23}{2}\frac{n-s}{k}\right) \M(k)\log(k) + (n-s)\log(s)+
  \bigO{(\frac{n-s}{k})^2\M(k)\loglog k}$ operations in $\K$.
\end{lem}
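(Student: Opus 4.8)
The plan is to split the argument into the three assertions of the statement: correctness, then the $6k+\bigO{k/\log k}$ space bound, then the running time. For correctness I would first verify the loop invariants claimed in the comments: after the inner $j$-loop, $S_i^k = S_i\bmod X^k$ and $S_i^T = S_i\bmod T_i$, where $S_i=\prod_{j\neq i}T_j = M/T_i$; this is immediate from the two running products, the only point being that reducing modulo $X^k$ and modulo $T_i$ at each multiplication keeps every intermediate of size $k$. The heart of the correctness is to show that the vector $(b_1,\dots,b_k)$ available just before \cref{partinterpol:interpol} equals $(N_i(a_{1+k(i-1)}),\dots,N_i(a_{ki}))$, for $N_i$ the size-$k$ polynomial in the block decomposition of \cref{eq:fastinterpolation3}. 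Here I would use three identities valid because each $a_{j+k(i-1)}$ is a root of $T_i$: $S_i(a_{j+k(i-1)})=(S_i\bmod T_i)(a_{j+k(i-1)})$, which is the value $b_j$ returned by the first \textsc{Eval}; $G(a_{j+k(i-1)})=(G\bmod T_i)(a_{j+k(i-1)})=z_j$; and $H(a_{j+k(i-1)})=(y_{j+k(i-1)}-G(a_{j+k(i-1)}))/a_{j+k(i-1)}^s$, coming from $F=G+X^sH$ and $F(a_{j+k(i-1)})=y_{j+k(i-1)}$. Combining them shows the normalization step sets $b_j=H(a_{j+k(i-1)})/S_i(a_{j+k(i-1)})=N_i(a_{j+k(i-1)})$, so \textsc{Interpol} recovers $N_i$; finally, summing the update $H_{[0..k[}\gets H_{[0..k[}+N_i(S_i\bmod X^k)\bmod X^k$ over all $i$ reproduces \cref{eq:fmodxk} written for $H$ in place of $F$, giving $H\bmod X^k$.

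For the space bound I would enumerate the simultaneously live length-$k$ buffers ($S_i^k$, $S_i^T$, $T_i$, the scratch $T_j$, $G^T$ together with its evaluations $z$, and $b$ reused to host $N_i$), noting that $z$ may be overwritten once the normalization loop has consumed it. I would then invoke the space bounds of the subroutines — $k+\bigO{k/\log k}$ for each balanced \textsc{Eval} (\cref{lem:mpeval-linspace-balanced}), $2k+\bigO{k/\log k}$ for \textsc{Interpol} (\cref{lem:interp-linspace}), and $\bigO{k}$ for the divide-and-conquer products — and argue that each subroutine executes while at least one persistent buffer is dead, so its workspace is drawn from the same pool. Confirming the constant is exactly $6$ is a careful but routine tally.

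For the time analysis, let $p=(n-s)/k$ be the number of blocks. The dominant cost is the repeated reconstruction of the subproduct trees of the $T_j$: the inner loop rebuilds each $T_j$ afresh for every $i\neq j$ — rebuilding rather than caching is precisely what holds the space at $\bigO{k}$ — giving $p(p-1)$ tree constructions at $\tfrac12\M(k)\log(k)+\bigO{\M(k)}$ each, hence the leading $\tfrac12\bigl(\tfrac{n-s}{k}\bigr)^2\M(k)\log(k)$ term; the short products into $S_i^k$ and the product-and-reduction into $S_i^T$ are $\bigO{\M(k)}$ each and fold into the $\bigO{p^2\M(k)\loglog k}$ remainder. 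The $\bigO{p}$ per-block costs — constructing $T_i$, the two balanced evaluations, and the interpolation — I would bound with \cref{lem:mpeval-linspace-balanced,lem:interp-linspace}, reusing subproduct trees wherever the evaluation points coincide; the two \textsc{Eval} calls act on the same block points, and the interpolation over those points can reuse the same tree. The powers $a_{j+k(i-1)}^s$ contribute $(n-s)\log(s)$ through fast exponentiation across all $pk=n-s$ points.

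The hard part will be twofold. First, the exact per-block constant $\tfrac{23}{2}$ depends delicately on how much subproduct-tree work is genuinely shared between the two evaluations and the interpolation within a block: collapsing the naive $\tfrac12+\tfrac72+\tfrac72+5=\tfrac{25}{2}$ down to $\tfrac{23}{2}$ requires tracking, inside the proofs of \cref{lem:mpeval-linspace-balanced,lem:interp-linspace}, which $\tfrac12\M(k)\log(k)$ tree-building passes can be elided. Second, I must verify that the reductions $G\bmod T_i$ at \cref{partinterpol:mod} stay within the stated bound: each costs $\bigO{\tfrac{s}{k}\M(k)}$ by balanced division, so the $p$ of them total $\bigO{\tfrac{(n-s)s}{k^2}\M(k)}$, which is $\bigO{p^2\M(k)}$ exactly when $s=\bigO{n-s}$ — the regime in which \cref{alg:partinterpol} is invoked in \cref{thm:interpol}. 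Establishing this absorption, or otherwise stating the hypothesis that guarantees it, is the remaining obstacle.
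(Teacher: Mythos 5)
Your plan follows the paper's proof almost step for step: the same correctness argument (the invariants for $S_i^k$ and $S_i^T$, the identity $b_j=H(a_j)/S_i(a_j)=N_i(a_j)$, then \cref{eq:fmodxk,eq:fastinterpolation3}), the same live-buffer tally for the space bound (the paper pins the constant $6$ by computing $S_i^T\times T_j\bmod T_i$ as an in-place product into $2k$ scratch registers followed by an in-place Euclidean division that reuses the registers of $S_i^T$ and $T_j$ for quotient and remainder), and the same decomposition of the running time into tree-root constructions, per-block evaluation/interpolation, and $(n-s)\log(s)$ for the exponentiations. Your first ``hard part'' dissolves: no sharing between evaluation and interpolation is needed. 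The paper shares the subproduct tree only between the two equal-size evaluations of \cref{partinterpol:eval}, which together cost $\frac72+\frac72-\frac12=\frac{13}{2}$, and invokes \cref{lem:interp-linspace} as a black box for the $5$. Per block this gives $\frac12+\frac{13}{2}+5=12$ including the construction of $T_i$, and combined with your count of $p(p-1)$ rebuilds of the $T_j$'s one gets $\frac12 p(p-1)+12p=\frac12 p^2+\frac{23}{2}p$, exactly the stated bound; no surgery inside the proofs of \cref{lem:mpeval-linspace-balanced,lem:interp-linspace} is required.

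The genuine gap is in your second point. The hypothesis you propose, $s=\bigO{n-s}$, is \emph{not} the regime in which \cref{alg:partinterpol} is invoked by \cref{alg:interpol}: there $k\approx(n-s)/(\cpi+1)$, so $n-s$ decreases geometrically while $s$ increases towards $n$, and after the first $\bigO{\cpi}$ iterations one has $s\gg n-s$ (ultimately $n-s=\bigO{1}$). Hence the cost $\bigO{p\cdot\frac{s}{k}\M(k)}$ of the $p$ reductions $G\bmod T_i$ at \cref{partinterpol:mod} cannot be discharged the way you suggest, and as written this part of your argument fails. You are in good company: the paper's own proof buries this step in ``other steps have a complexity in $\bigO{\M(k)}$'', which is likewise only justified when $s=\bigO{k}$; without an additional hypothesis on $s$ the lemma really needs an extra $\bigO{p\,\frac{s}{k}\M(k)}$ term. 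Note that this does not endanger \cref{thm:interpol} asymptotically: using the paper's assumption $\alpha\M(n)\le\M(\alpha n)$ for $\alpha\ge 1$, each call contributes at most $p\M(s)\le(\cpi+1)\M(n)$ extra operations, so over the $\bigO{\cpi\log n}$ calls this adds $\bigO{\cpi^2\M(n)\log n}$ and only inflates the constant --- but neither your argument nor the paper's establishes the lemma exactly as stated.
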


\begin{proof}
  The correctness follows from the above discussion. In particular, note that the polynomials $S_i^k$ and $S_i^T$ at
  \cref{partinterpol:Sik,partinterpol:SiT} equal $S_i\bmod X^k$ and $S_i\bmod T_i$ respectively. Furthermore,
  $z_j=G(a_{j+k(i-1)})$ since $G(a_{j+k(i-1)})= (G \bmod T_i)(a_{j+k(i-1)})$.  Hence, \cref{partinterpol:interpol} correctly
  computes the polynomial $N_i$ and the result follows from \cref{eq:fmodxk,eq:fastinterpolation3}.

  From the discussion in Section~\ref{ssec:spacecomp-evalinterp}, we can compute each $T_i$ in
  $1/2\M(k)\log(k)+\bigO{M(k)\loglog k}$ operations in $\K$ and $k$ extra space. \Cref{partinterpol:eval} requires some care as we can share some
  computation among the two \emph{equal-size} evaluations. Indeed, the subproduct trees induced by this computation are identical
  and thus can be computed only once. Using Lemma~\ref{lem:mpeval-linspace-balanced}, this amounts to
  $\frac{13}{2}\M(k)\log (k)+\bigO{\M(k)}$ operations in $\K$ using $k+\bigO{k/\log k}$ extra space.  \Cref{partinterpol:interpol} can be done in
  $5\M(k)\log(k)+\bigO{M(k)\loglog k}$ operations in $\K$ and $2k+\bigO{k/\log k}$ extra space using
  Lemma~\ref{lem:interp-linspace}.  Taking into account the $n-s$ exponentations $a_j^s$, and that other steps have a complexity
  in $\bigO{\M(k)}$, the cost of the algorithm is
  \[
    \left(\frac{1}{2}\left(\frac{n-s}{k}\right)^2+ \frac{23}{2}\frac{n-s}{k}\right) \M(k)\log(k) + (n-s)\log(s)
    +O\left(\left(\frac{n-s}{k}\right)^2\M(k)\loglog k\right).
\]
  
  We show that $6k+\bigO{k/\log k}$ extra registers are enough to implement this algorithm.  At \cref{partinterpol:SiT}, the
  polynomials $T_i, T_j, S_i^k, S_i^T$ must be stored in memory. The computation involved at this step requires only $2k$ extra
  registers as $S_i^T \times T_j \bmod T_i$ can be computed with an in-place full product (stored in the extra registers) followed
  by an in-place division with remainder using the registers of $S_i^T$ and $T_j$ for the quotient and remainder storage.  Using
  the same technique \cref{partinterpol:mod} requires only $k$ extra space as for
  \crefrange{partinterpol:init}{partinterpol:Sik}. At \cref{partinterpol:eval}, we need $3k$ registers to store $G_T,S_i^T,S_i^k$
  and $2k$ registers to store $(b_1,\dotsc,b_k)$ and $(z_1,\dotsc,z_k)$, plus $k+\bigO{k/\log k}$ extra register for the
  computation. At \cref{partinterpol:interpol} we re-use the space of $G^T,S_i^T$ for $N_i$ and the extra space of the computation
  which implies the claim.
\end{proof}

We can now provide our in-place variant for fast interpolation.
\begin{algorithm}
\caption{In-Place Interpolation (\textsc{InPlaceInterpol})}
\begin{algorithmic}[1]
  \Input $(y_1,\dots,y_n)$ and $(a_1,\dots,a_n)$ of size $n$  such that $a_i,y_i\in\K$;
  \Output $F\in\KX$ of size $n$, such that $F(a_i) = y_i$ for $0\leq i\leq n$.
  \Required \textsc{PartInterpol} with space complexity $\le \cpi k$
  \State $s\gets 0$
  \While{$s < n$}
    \State $k \gets \floor{\frac{n-s}{\cpi+1}}$ 
    \If{$k = 0$} $k\gets n-s$     \EndIf
    \State $Y,A \gets (y_1,\dots,y_{n-s}),(a_1,\dots,a_{n-s})$
    \State $F_{[s..s+k[} \gets \textsc{PartInterpol}(F_{[0..s[}, Y, A, k)$ \label{interpol:part}
    \State $s \gets s+k$

  \EndWhile 
\end{algorithmic}
\label{alg:interpol}
\end{algorithm}

\begin{thm}\label{thm:interpol}
  Algorithm~\ref{alg:interpol} is correct. It uses $\bigO{1}$ extra space and at most
  $\frac{1}{2}(c^2+23c)\M(n)\log n+\bigO{\M(n)\loglog n}$ operations in $\K$, where $c=1+\cpi$.
\end{thm}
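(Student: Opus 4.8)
The plan is to establish correctness from the invariant maintained by the loop, and then sum the per-iteration cost of \textsc{PartInterpol} across all iterations, with the geometric decay of $k$ collapsing the sum into a single $\M(n)\log n$ term. The structure mirrors the analysis of \cref{alg:mpeval} in \cref{thm:inplace-multieval}: the algorithm repeatedly computes a fresh chunk of $k$ coefficients of $F$ using the already-computed prefix $F_{[0..s[}$ as the known polynomial $G$ passed to \textsc{PartInterpol}, while the as-yet-unwritten output registers $F_{[s+k..n[}$ serve as work space.

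\textbf{Correctness and the in-place property.} First I would argue that the loop invariant ``$F_{[0..s[}$ equals the true interpolant reduced mod $X^s$'' is preserved. At the start $s=0$ and the invariant holds vacuously. Each iteration calls \textsc{PartInterpol}$(F_{[0..s[},Y,A,k)$, which by \cref{lem:PartInterpolation} returns $H\bmod X^k$ where $F=G+X^sH$ is exactly the unique size-$n$ interpolant; writing this into $F_{[s..s+k[}$ extends the correct prefix by $k$ coefficients, so the invariant holds with $s$ replaced by $s+k$. When the loop exits, $s=n$ and the whole of $F$ is correct. For the space, the key point is that \textsc{PartInterpol} needs $\le\cpi k$ extra registers, and the choice $k=\floor{(n-s)/(\cpi+1)}$ guarantees $\cpi k\le n-s-k$, i.e. the free output registers $F_{[s+k..n[}$ suffice as scratch space; the fallback $k\gets n-s$ when the floor is zero handles the final tail in constant space. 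Hence only $\bigO{1}$ registers outside input/output are ever used.

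\textbf{Complexity.} The heart of the argument is the geometric decay of the chunk sizes. With $k_i$ the value of $k$ in iteration $i$ and $c=1+\cpi$, the recurrence $k\approx(n-s)/c$ gives $n-s_{i}\approx n(1-1/c)^{i}$, so the loop runs for $\bigO{\log_{c/(c-1)}(n)}=\bigO{\log n}$ iterations and $\sum_i k_i\le n$. I would plug the per-call cost from \cref{lem:PartInterpolation} with $s=s_i$ and $k=k_i$; the dominant term there is $\tfrac12\bigl((\tfrac{n-s}{k})^2+23\tfrac{n-s}{k}\bigr)\M(k)\log k$. Since $(n-s_i)/k_i\approx c$ is constant across iterations, this is $\tfrac12(c^2+23c)\M(k_i)\log(k_i)$ per iteration. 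Summing over all iterations and using superlinearity of $\M$ together with $\sum_i k_i\le n$ (so that $\sum_i\M(k_i)\log(k_i)\le\M(n)\log(n)$), the total is at most $\tfrac12(c^2+23c)\M(n)\log n+\bigO{\M(n)\loglog n}$, matching the claim.

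\textbf{The main obstacle} will be the bookkeeping of the lower-order terms, and in particular verifying that the $(n-s)\log(s)$ exponentiation term and the $\bigO{((n-s)/k)^2\M(k)\loglog k}$ term from \cref{lem:PartInterpolation} both sum to something absorbed into $\bigO{\M(n)\loglog n}$ rather than spoiling the leading constant. The exponentiation cost sums to $\bigO{n\log n}$ across iterations, which is dominated; the $\loglog k$ term requires care because $(n-s_i)/k_i$ is bounded by the constant $c$, so each contributes $\bigO{\M(k_i)\loglog k_i}$ and the sum telescopes against $\sum_i k_i\le n$. A subtle point worth checking explicitly is that $\cpi$ is genuinely a constant (it does not depend on $n$), so that $c$ is constant and the leading factor $\tfrac12(c^2+23c)$ is legitimately $\bigO{1}$; this follows from the $6k+\bigO{k/\log k}$ space bound in \cref{lem:PartInterpolation}, which gives $\cpi=6$ and hence $c=7$, explaining the numerical value $\simeq105$ reported in \cref{table:comp}.
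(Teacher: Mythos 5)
Your proposal is correct and follows essentially the same route as the paper's proof: correctness via the invariant that \textsc{PartInterpol} extends the prefix $F_{[0..s[}$, the in-place property from $\cpi k\le n-s-k$ guaranteed by the choice $k=\floor{(n-s)/(\cpi+1)}$, and the complexity bound obtained by noting $(n-s_{i-1})/k_i$ is bounded by the constant $c$, then collapsing $\sum_i\M(k_i)\log(k_i)\le\M(n)\log(n)$ via superlinearity while absorbing the exponentiation term $\bigO{n\log n}$ and the $\loglog$ terms into the lower-order bound. Your closing observation that $\cpi\approx 6$ gives $c=7$ and hence the constant $\tfrac12(49+161)=105$ matches the paper's concluding remark.
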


\begin{proof}
  The correctness is clear from the correctness of Algorithm \textsc{PartInterpol}.  To ensure that the algorithm uses $\bigO{1}$
  extra space we notice that at \cref{interpol:part}, $F_{[s+k..n[}$ can be used as work space.  Therefore, as soon as
  $\cpi k\le n-s-k$, that is, $k\le \frac{n-s}{\cpi+1}$, this free space is enough to run \textsc{PartInterpol}.  Note that when
  $k=0$, $n-s < c_{pi}+1$ is a constant, which means that the final computation can be done with $\bigO{1}$ extra space.  Let
  $k_1$, $k_2$, \dots, $k_t$ and $s_1$, $s_2$, \dots, $s_t$ be the values of $k$ and $s$ taken during the course of the algorithm.
  Since $s_i=\sum_{j=1}^i k_j \leq n$ with $s_0=0$, we have $k_i\le\lambda n(1-\lambda)^{i-1}$, and
  $s_i\ge n(1 - (1-\lambda)^{i})$ where $\lambda=\frac{1}{\cpi+1}$.  The time complexity $T(n)$ of the algorithm satisfies
  \[
    T(n)\le \sum_{i=1}^t \left(\frac{c^2}{2}+\frac{23c}{2}\right) \M(k_i)\log(k_i) + \sum_{i=1}^t(n-s_{i-1})\log(s_{i-1})
    + \bigO{c^2\M(k_i)\loglog k_i}
  \]
  since $\frac{n-s_{i-1}}{k_i}\le c=\cpi+1$ by definition of $k_i$. Moreover, we have
  \(\sum_{i=1}^t\M(k_i)\log(k_i)\le \M(\sum_i k_i)\log n\le \M(n)\log(n)\).  By definition of $s_i$, we have
  $n-s_i\le n(1-\lambda)^i$ which gives
  \[
    \sum_{i=1}^t(n-s_{i-1})\log(s_{i-1}) \leq  n \log(n) \sum_{i=1}^t (1-\lambda)^i \le (\cpi+1) n \log n.
  \]
  This concludes the proof.
\end{proof}

Since $\cpi < 6+\epsilon$ for any $\epsilon>0$, the complexity can be approximated to $105 \M(n)\log(n)$, which is $42$ times slower
than the fastest interpolation algorithm (see Table~\ref{table:comp}).

\subsection*{Acknowledgments}
We thank Grégoire Lecerf, Alin Bostan and Michael Monagan for pointing out the references~\cite{vonzurGathen1992,Monagan93}.

\newcommand{\Gathen}{\relax}\newcommand{\Hoeven}{\relax}

\clearpage

\appendix

\section{Interpolation with linear space}
\label{app:interp-linspace}

The algorithm proceeds as:
\begin{enumerate} \def\labelenumi{(\arabic{enumi})}
  \item Run the subproduct tree algorithm for each group of
    \(n/\log(n)\) interpolation points, saving only the roots of each
    subtree \(M_1,\ldots,M_{\ceil{\log(n)}}\), using fast out-of-place
    full multiplications.
  \item Run the subproduct tree algorithm over these \(M_i\)'s to
    compute the root \(M\), using in-place full multiplications 
    from~\cite{GioGreRo19}, discarding other nodes in the tree.
  \item Compute the derivative \(M'\) in place.
  \item Compute the remainders \(M'\bmod M_i\) for
    \(1\le{}i\le{}\ceil{\log(n)}\), using the balanced (with
    precomputation) algorithm described in
    \cref{ssec:classical-complexity}.
    The size-$n$ polynomial $M'$ may now be discarded.
  \item For each group $i$, compute the full subproduct tree over its
    $n/\log(n)$ points. Use this to perform multi-point evaluation
    of \(M'\bmod{}M_i\) over the \(n/\log(n)\) points of that group
    only, and then compute the partial sum of
    \eqref{eq:fastinterpolation} for that group's points.
    Discard the subproduct tree but save the rational function partial
    sum for each group.
  \item Combine the rational functions for the \(\ceil{\log(n)}\) groups
    using a divide-and-conquer strategy, employing again the in-place full
    multiplications from~\cite{GioGreRo19}.
\end{enumerate}

The following lemma gives the complexity of this linear-space
interpolation algorithm.

\interplinspace*

\begin{proof}
  Steps (1) and (5) collectively involve, for each group,
  two subproduct tree computations,
  one multi-point evaluation, and one rational function summation over
  each group, for a total of \(3\M(n)\log(n) + \bigO{\M(n)}\) time.
  Step (4) contributes another
  \(2\M(n)\log(n) + \bigO{\M(n)}\) operations in $\K$.
  In steps (2) and (6), the expensive in-place multiplications are used
  only for the top \(\ceil{\loglog(n)}\) levels of the entire subproduct
  tree, so this contributes only \(\bigO{\M(n)\loglog(n)}\).

  For the space, note that the size-$n$ output space may be used during
  all steps until the last to store intermediate results.
\end{proof}

\end{document}